\theoremstyle{plain}
\newtheorem{theorem}{Theorem}[section]
\newtheorem{proposition}{Proposition}[section]
\newtheorem{corollary}{Corollary}[section]
\newtheorem{lemma}{Lemma}[section]
\theoremstyle{remark}
\newtheorem{remark}{Remark}[section]
\DeclareMathOperator{\lin}{span}
\DeclareMathOperator{\diverg}{div}
\DeclareMathOperator{\Hess}{Hess}
\DeclareMathOperator{\tr}{Tr}
\begin{document}

\title{Dirac and magnetic Schr\"odinger operators on fractals}
\author{Michael Hinz$^1$}
\address{Mathematisches Institut, Friedrich-Schiller-Universit\"at Jena, Ernst-Abbe-Platz 2, 07737, Germany and Department of Mathematics,
University of Connecticut,
Storrs, CT 06269-3009 USA}
\email{Michael.Hinz.1@uni-jena.de and Michael.Hinz@uconn.edu}
\thanks{$^1$Research supported in part by NSF grant DMS-0505622 and by the Alexander von Humboldt Foundation (Feodor Lynen Research Fellowship Program)}
\author{Alexander Teplyaev$^2$}
\address{Department of Mathematics, University of Connecticut, Storrs, CT 06269-3009 USA}
\email{Alexander.Teplyaev@uconn.edu}
\thanks{$^2$Research supported in part by NSF grant DMS-0505622}
\date{\today}

\begin{abstract}
In this paper we define (local) Dirac operators and magnetic Schr\"odinger Hamiltonians on fractals and prove their (essential) self-adjointness. To do so we use the concept of $1$-forms and derivations associated with Dirichlet forms as introduced by Cipriani and Sauvageot, and further studied by the authors jointly with R\"ockner, Ionescu and Rogers. For simplicity our definitions and results are formulated for the Sierpinski gasket with its standard self-similar energy form. We point out how they may be generalized to  other spaces, such as the classical   Sierpinski carpet. 
\tableofcontents
\end{abstract}
\maketitle

\section{Introduction}

The aim of the present paper is to introduce natural (local) Dirac and magnetic Schr\"odinger operators on fractal spaces  and to prove that they are (essentially) self-adjoint. Our analysis uses the concept of $1$-forms in the content of the Dirichlet form theory, and is  based on recent results 
on $1$-forms and vector fields  \cite{CGIS11,CS03,CS09,IRT} and \cite{HRT}, respectively. 
 
 To make the paper more accessible and to approach the most interesting classical examples, we formulate our definitions and results for the Sierpinski gasket, and later provide some remarks how to modify them for more general   fractals and  other spaces. 
 This is particularly straightforward to do for species with a resistance form, in the sense of Kigami \cite{Ki03,Ki08,Ki12} (see also \cite{BBKT,Hino10,T08}), such as the classical two-dimensional Sierpinski carpet, but many results are valid for much more general spaces. In particular, extending our results for spaces that are not locally compact will be subject of future work.

Our space of $1$-forms is  a Hilbert space, which allows to identify $1$-forms and vector fields, and to introduce other notions of vector analysis, as recently done in \cite{HRT} (which generalizes earlier approaches to vector analysis on fractals, see \cite{Ki93,Ki08,Ku93,Str00,T08}). This is a part of a  comprehensive program to study  vector equations on general non-smooth spaces   which carry a diffusion process or, equivalently, a local regular Dirichlet form. 

\def\arti{and references therein}

The study of the  Laplacian on fractal graphs was originated   in physics literature (see \cite{Al,B92,DABK,GAM,R,RT}), and for a selection of recent mathematical physics results see \cite[\arti]{ADT,ADT2010PRL,FKhStr,GRS,HK-2010,KKPSS}. 
Among the problems  where fractal spaces seem to appear naturally we would like to mention, in particular, the spaces of fractional dimension appearing in quantum gravity  \cite[\arti]{AJL,LR2007,Reuter2011}. 
Besides that, our motivation is  coming 
from the theory of quantum graphs \cite[\arti]{EP,  Exner,  FKW07,  Har08,  KS99,  KS00,  KS03,  Ku04,  Ku05,  KuchmentPost,  Po09}; 
from the spectral theory on fractals \cite[\arti]{eigenpapers,   DerfelEtAl2012,  HSTZ,  KL1,  KL2,  LL,  Mal93,  Mal95,  MT03, RoTe, RomeoSteinhurst}; 
form some questions of non-commutative analysis \cite[\arti]{CL,  CIL, CGIS11,CS03,CS09, IRT} and 
the 
theory of spectral zeta functions \cite{ZetaBook,LvF,ST-merozeta,T-zeta}; 
and from the localization problems \cite[\arti]{AM, MT95, Quint, SA, T98}.

Recall that roughly speaking, the Dirac operator is defined as the square root of the Laplace operator. (Note, however, that classical Dirac operator for diffusions   is a local operator, which excludes the possibility of using the spectral theorem to define it.) 
Depending on context and purpose it appears in various formulations with possibly different complexity and sign conventions. On the real line $D=-i d/dx$ may for instance be regarded as the Dirac operator. Given a Riemannian manifold $M$, its tangent bundle $\Lambda T^\ast M$ can be turned into a Clifford module, and the associated Dirac operator is defined as $D=d+d^\ast$, where $d$ is the exterior derivative and $d^\ast$ is its adjoint, cf. \cite{BGV}. For a spin $1/2$-particle in the plane the Dirac operator is given by $D=-i\sigma_x \partial/\partial x_1-i\sigma_y\partial/\partial x_2$, where $\sigma_x$ and $\sigma_y$ are the respective Pauli spin matrices, \cite{F}. More generally, it may be defined for spinor bundles over spin manifolds, see \cite{BGV, F} or \cite{G} for background and details.

Dirac operators on discrete graphs have for instance been considered in \cite[Section 4]{D93} with a strong emphasis on connections to noncommutative geometry. The paper \cite{Re02} follows a similar spirit and considers related spectral triples and Connes metrics. More recently Dirac operators on discrete graphs and related index theorems have been studied in \cite{Po09}. In this reference they act on a tensor product of form $H_0\oplus H_1$, where $H_0$ and $H_1$ are Hilbert spaces of functions on the vertices and edges, respectively. Roughly speaking, the discrete difference operator $d:H_0\to H_1$ plays the role of the exterior derivative. Denoting its adjoint by $d^\ast:H_1\to H_0$ , the associated Dirac operator is then defined on $H_0\oplus H_1$ by
\begin{equation}\label{E:DiracIntro}
D=\left(
\begin{array}{rr}
0 &d^\ast\\
d &0
\end{array}\right),
\end{equation}
and as a consequence $D^2$ yields the matrix Laplacian acting on $H_0\oplus H_1$. In a somewhat similar fashion \cite{Po09} also investigates Dirac operators and index theorems on quantum graphs (often referred to as metric graphs or quantum wires, \cite{KS99, KS00, Ku04, Ku05}), now within the context of suitable Sobolev spaces. A preceding article dealing with index theorems on quantum graphs is \cite{FKW07}, and a related much earlier reference for Dirac operators is \cite{BT89}. Different quantization schemes are reviewed in \cite{Har08}. 

There is an extensive literature on magnetic Schr\"odinger operators. In the simplest cases (such as for bounded and sufficiently integrable potentials $A$ and $V$) the essential self-adjointness of magnetic Schr\"odinger Hamiltonians
\begin{equation}\label{E:prototype}
(-i\nabla-A)^2+V
\end{equation}
on Euclidean space can be deduced from classical perturbation theorems in Hilbert spaces, cf. \cite[Section X.2]{RS}. More sophisticated pointwise methods can be found in \cite[Section X.3 and X.4]{RS}. Essential self-adjointness results for operators on manifolds may be found in the comprehensive paper \cite{Sh01}, see also \cite{GK12} for singular potentials. Discrete magnetic Schr\"odinger operators on lattices and graphs have for instance been discussed in \cite{BKP85, B92, CTT10b, DM06, HS99, Sh94, Sun94}. The paper \cite{Sh94} introduces discrete magnetic Laplacians on the two-dimensional integer lattice, proves that they have no point spectrum and compares them to almost Mathieu operators and to one-dimensional quasi-periodic operators. This is closely connected to the (one-dimensional) ten martini problem, solved in \cite{AJ}. Periodic magnetic Schr\"odinger operators on the two-dimensional integer lattice are treated in \cite{Sun94} and their spectra, typically of band or Cantor type, are studied. In \cite{HS99} magnetic Schr\"odinger operators on graphs are considered. Under some conditions the analyticity of the bottom of their spectra is verified and relations to corresponding operators on a quotient graph (by a suitable automorphism group) are discussed. The paper \cite{DM06}  also investigates discrete magnetic Laplacians and Schr\"odinger operators on graphs, compares their spectra and heat kernels to the original graph Laplacians, defines related Novikov-Shubin invariants and establishes a long term decay result for the heat kernel trace. In \cite{CTT10b} the essential self-adjointness of a discrete version of (\ref{E:prototype}) is shown, based on a previous result \cite{CTT10a} for operators with zero magnetic potential. Reference \cite{CTT10b} also discusses gauge invariance in terms of holonomy maps. First steps towards magnetic Schr\"odinger operators on fractals had been taken in \cite{BKP85} and \cite{B92} by studying them on infinite Sierpinski lattices. Some decimation techniques for the spectrum and related numerical experiments can be found in \cite{BKP85}. The paper \cite{B92} sets up a renormalization group equation for the magnetic Laplacian and discusses relations to superconductivity. Another branch of literature concerns quantum graphs, see \cite{KS99, KS00, Ku04, Ku05}. The paper \cite{KS03} introduces magnetic Laplacians on metric graphs and, based on results in \cite{KS00}, provides a matrix criterion for the boundary conditions to characterize self-adjointness. In \cite{BGP} 
a metric graph point of view is used to provide a comprehensive study of the two-dimensional periodic square graph lattice with magnetic fields.
The paper \cite{EP} shows that any self-adjoint vertex coupling on a metric graph can be approximated by a sequence of magnetic Schr\"odinger  operators on a network of shrinking tubular neighborhoods.

For prototype examples of fractal sets carrying a diffusion not even the forms of a Dirac operator and a magnetic Laplacian had been clear. This is due to the fact that Laplace operators had been studied in several papers and books (see for instance \cite{Ba, Ki93, Ki01, Ki03, Li, ST-BE} and the references therein), but definitions and results concerning analogs of first order differential operators (gradients) were sparse \cite{Ki93, Ku89, Str00, T00}, and hardly flexible enough to fit a sufficient functional analytic context. 

In \cite{CS03} and \cite{CS09} differential $1$-forms and derivations based on Dirichlet forms had been introduced. In these papers a Hilbert space $\mathcal{H}$ of $1$-forms is constructed as, roughly speaking, the completion of the tensor product $\mathcal{F}\otimes\mathcal{F}$ of the space $\mathcal{F}$ of energy finite functions, a concept that leads to an $L_2$-theory, see for instance \cite{HT} for further explanations. This approach has been studied further in \cite{CGIS11, IRT} and also in \cite{HRT}, where related notions of vector analysis are proposed. In this context the desired objects can be defined. More precisely, our Theorems \ref{T:Dirac} and \ref{T:main} below state that under some conditions, analogs of (\ref{E:DiracIntro}) and (\ref{E:prototype}) define essentially self-adjoint operators on suitable Hilbert spaces of functions and vector fields on fractals, respectively. 

In the next section we review the approach of \cite{CS03, CS09} to $1$-forms based on energy for the specific example of the Sierpinski gasket $K$ with its standard self-similar energy form $(\mathcal{E},\mathcal{F})$.  We recall the definitions of the gradient and divergence operators from \cite{HRT} and the energy Laplacian for functions. In several places we provide auxiliary formulations in terms of harmonic coordinates. In Section \ref{S:Dirac} we define a related Dirac operator $D$ that acts on the tensor product $L_{2,\mathbb{C}}(K,\nu)\otimes \mathcal{H}_\mathbb{C}$ of the spaces of complex-valued square integrable functions (with respect to the Kusuoka measure $\nu$) and complex-valued vector fields on $K$. Theorem \ref{T:Dirac} proves it it is self-adjoint. In Section \ref{S:magnetic} we first provide a priori estimates necessary to introduce a bilinear form $\mathcal{E}^{a,V}$ associated with a magnetic Schr\"odinger Hamiltonian $H^{a,V}$ on $K$. Then we establish a result on its essential self-adjointness on $L_{2,\mathbb{C}}(K,\nu)$, Theorem \ref{T:main}, which merely follows from our definitions, preliminary estimates and a simple KLMN theorem. Finally we prove sort of a gauge invariance result, Theorem \ref{T:gauge}.
Section \ref{S:other} contains some instructions how to generalize the presented results to arbitrary finitely ramified fractals carrying a regular resistance form.

In this paper we generally intend to provide a basic setup to study Dirac operators and magnetic fields on fractals. We do not discuss questions regarding the spectrum, refined pointwise statements or approximations. These topics will be addressed elsewhere.

To simplify notation, sequences or families indexed by the naturals will be written with index set suppressed, e.g. $(a_n)_n$ stands for $(a_n)_{n\in\mathbb{N}}$. Similarly, $\lim_n a_n$ abbreviates $\lim_{n\to \infty}a_n$.

\section{Vector analysis on the Sierpinski gasket}\label{S:vector}

This section recalls a few items of the concept of $1$-forms and vector fields based on Dirichlet forms as studied in \cite{CGIS11,CS03,CS09,IRT} and \cite{HRT}, respectively. For simplicity we formulate definitions and results for the Sierpinski gasket, some comments on finitely ramified fractals are provided in Section \ref{S:other}. For investigations of other physical models on the Sierpinski gasket see for instance \cite{FKhStr, Str09}.

Let $\left\lbrace p_1,p_2,p_3\right\rbrace$ be the vertex set of an equilateral triangle in $\mathbb{R}^2$. The \emph{Sierpinski gasket} $K$ is the unique nonempty compact subset of $\mathbb{R}^2$ such satisfying the self-similarity relation
\[K=\bigcup_{i=1}^3\varphi_i K,\]
where $\varphi_i(x)=x/2+p_i/2$. For our purposes the embedding in $\mathbb{R}^2$ is inessential, only the associated post critically finite self-similar structure $(K,\left\lbrace 1,2,3\right\rbrace, \left\lbrace \varphi_1,\varphi_2,\varphi_3\right\rbrace)$ in the sense of \cite{Ki01} matters. Let $(\mathcal{E},\mathcal{F})$ denote the standard self-similar energy form on $K$, obtained as the increasing limit
\[\mathcal{E}(u):=\lim_n\mathcal{E}_n(u)\]
of a sequence of rescaled discrete energies
\[\mathcal{E}_n(u)=\left(\frac{5}{3}\right)^n\sum_{x\sim_n y}(u(x)-u(y))^2\]
on approximating graphs. For precise definitions and background we refer to \cite{BBST, CS09, Ki93, Ki01}. The form $(\mathcal{E},\mathcal{F})$ is a regular resistance form in the sense of \cite{Ki12}. In particular, $\mathcal{E}^{1/2}$ is a Hilbert norm on the space  $\mathcal{F}/\sim$ obtained from $\mathcal{F}$ by factoring out constants, and there is some constant $c>0$ such that
\begin{equation}\label{E:resistest}
\left\|f\right\|_\infty\leq c\:\mathcal{E}(f)^{1/2}
\end{equation}
for all $f\in\mathcal{F}/\sim$. The space $\mathcal{F}$ is a dense subalgebra of $C(K)$, and in particular
\begin{equation}\label{E:pointwisemult}
\mathcal{E}(fg)^{1/2}\leq \mathcal{E}(f)^{1/2}\left\|g\right\|_{L_\infty(K,\nu)}+\left\|f\right\|_{L_\infty(K,\nu)}\mathcal{E}(g)^{1/2}, \ \ f,g\in\mathcal{F}.
\end{equation}
For any $f\in\mathcal{F}$ we can define a unique (nonatomic) Borel \emph{energy measure} $\nu_f$ on $K$, see for instance \cite{T00}, and polarization yields mutual energy measures $\nu_{f,g}$ for $f,g\in\mathcal{F}$.

The space of nonconstant harmonic functions on $K$ is two dimensional. Let $\left\lbrace h_1, h_2\right\rbrace$ be a complete energy orthonormal system for it. The \emph{Kusuoka energy measure} $\nu$  is defined by 
\[\nu:=\nu_{h_1}+\nu_{h_2},\]
and this definition does not depend on the choice of the complete energy orthonormal system $\left\lbrace h_1, h_2\right\rbrace$. By construction all energy measures $\nu_{f,g}$ are absolutely continuous with respect to $\nu$ and have integrable densities $\Gamma(f,g)$. In particular, we can find Borel versions $Z_{ij}$ of the functions $\Gamma(h_i,h_j)$ and a Borel set $K_0\subset K$ such that for any $x\in K_0$, the real $(2\times 2)$-matrix
\[Z_x:=(Z_{ij}(x))_{ij=1,2}\]
is symmetric, nonnegative definite and of rank one, and we have $\nu(K\setminus K_0)=0$. For $x\in K\setminus K_0$ we may define $Z_x$ to be the zero matrix. See for instance \cite{Hino10, Ku89, T08}. Note that every $Z_x$, $x\in K$, acts as a projection in $\mathbb{R}^2$, and for fixed $x$ the space $(\mathbb{R}^2/ker Z_x, \left\langle \cdot, Z_x\cdot\right\rangle_{\mathbb{R}^2})$ is isometrically isomorphic to the image space $(Z_x(\mathbb{R}^2), \left\langle \cdot, Z_x\cdot\right\rangle_{\mathbb{R}^2})$. In addition, we may assume $K_0$ is such that $h(x)\neq 0$ for all $x\in K_0$.

According to \cite[Theorem 2.4.1]{Ki01} the regular resistance form $(\mathcal{E},\mathcal{F})$ defines a local regular Dirichlet form on $L_2(K,\nu)$. Therefore the measures $\nu_f$, $f\in\mathcal{F}$, coincide with the energy measures in the sense of \cite{FOT94}, and the operation $(f,g)\mapsto \Gamma(f,g)$ taking two members $f,g\in\mathcal{F}$ into the density $\Gamma(f,g)$ coincides with the \emph{carr\'e du champ} in the sense of \cite{BH91}. 

\begin{remark}
Here we consider the $L_2$-space $L_2(K,\nu)$ with respect to the Kusuoka measure. Note that the energy measures $\nu_{f,g}$ are singular with respect to the naturally associated renormalized self-similar Hausdorff measure on $K$, see \cite{BBST, Hino03, Hino05}.  
\end{remark}

Setting 
\[h(x):=(h_1(x),h_2(x)) \ \ \text{ and }\ \ y:=h(x)\] 
we obtain a homeomorphism $h$ from $K$ onto its image $h(K)$ in $\mathbb{R}^2$, and the latter may be viewed with coordinates $y$. The collection of functions of form $f=F\circ h$ with $F\in C^1(\mathbb{R}^2)$ is dense in $\mathcal{F}$, and for any such function the \emph{Kusuoka-Kigami formula} 
\begin{equation}\label{E:Kigami}
\mathcal{E}(f)=\int_K | Z\nabla F(y)|_{\mathbb{R}^2}^2 d\nu
\end{equation}
holds, where  $\nabla F$ is the usual gradient of $F$ in $\mathbb{R}^2$. More generally, by the chain rule \cite[Theorem 3.2.2]{FOT94} the energy measure $\nu_f$ of such $f$ is given by $|Z\nabla F(y)|_{\mathbb{R}^2}d\nu$.

By $L_{2,\mathbb{C}}(X,\nu)$ we denote the natural complexification $L_2(X,\nu)+iL_2(X,\nu)$ of $L_2(X,\nu)$. The closed form $\mathcal{E}$ on $L_2(X,\nu)$ can be complexified by setting 
\begin{equation}\label{E:complexenergy}
\mathcal{E}(f,g):=\mathcal{E}(f_1,g_1)-i\mathcal{E}(f_1,g_2)+i\mathcal{E}(f_2,g_1)+\mathcal{E}(g_1,g_2)
\end{equation}
for any $f=f_1+if_2$ and $g=g_1+ig_2$ from $\mathcal{F}_\mathbb{C}:=\mathcal{F}+i\mathcal{F}$. This yields a positive definite quadratic form $\mathcal{E}$ on $L_{2,\mathbb{C}}(X,\nu)$. That is, $\mathcal{E}$ is conjugate symmetric, linear in the first argument, and $\mathcal{E}(f)\geq 0$ for any $f\in\mathcal{F}_\mathbb{C}$. We will use a similar terminology for the mappings considered in what follows. The form $\mathcal{E}$ is densely defined and closed. Similarly, and in a way consistent with (\ref{E:complexenergy}), also the energy measure $\nu_{f,g}$ and their densities $\Gamma(f,g)$ can be complexified.

Consider $\mathcal{F}_\mathbb{C}\otimes\mathcal{F}_\mathbb{C}$ endowed with the symmetric bilinear form
\begin{equation}\label{E:scalarprod}
\left\langle a\otimes b, c\otimes d\right\rangle_\mathcal{H}=\int_K b\overline{d}\:\Gamma(a,c)d\nu,
\end{equation}
$a\otimes b, c\otimes d\in\mathcal{F}_\mathbb{C}\otimes\mathcal{F}_\mathbb{C}$. Let $\mathcal{H}_\mathbb{C}$ denote the Hilbert space obtained by factoring out trivial elements and completing. Following \cite{CS03, CS09} we refer to it as the \emph{space of $1$-forms} on $K$. For simplicity we will not distinguish between an element $a\otimes b$ and its equivalence class in $\mathcal{H}_\mathbb{C}$.

To rewrite several items in coordinates we also define the space
\[\mathcal{S}_\mathbb{C}:=\lin\left\lbrace f\otimes g: f=F\circ h, g=G\circ h\ \text{ with $F,G\in C^1_\mathbb{C}(\mathbb{R}^2)$}\right\rbrace.\]

\begin{theorem}
The space $\mathcal{S}_\mathbb{C}$ is dense in $\mathcal{H}_\mathbb{C}$.
\end{theorem}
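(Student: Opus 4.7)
The plan is to reduce the statement to the known density of $\{F\circ h:F\in C^1(\mathbb{R}^2)\}$ in $\mathcal{F}$, recorded in the excerpt just before the Kusuoka--Kigami formula, and then push this density through the tensor structure using the two a priori estimates already in the paper: the resistance bound \eqref{E:resistest} and the Leibniz-type inequality \eqref{E:pointwisemult}.

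First I would observe that, by the very construction of $\mathcal{H}_\mathbb{C}$ as a completion of the algebraic tensor product $\mathcal{F}_\mathbb{C}\otimes\mathcal{F}_\mathbb{C}$ modulo the kernel of $\langle\cdot,\cdot\rangle_\mathcal{H}$, finite linear combinations of elementary tensors $a\otimes b$ with $a,b\in\mathcal{F}_\mathbb{C}$ form a dense subspace. Therefore it suffices to prove that every such $a\otimes b$ lies in the $\mathcal{H}_\mathbb{C}$-closure of $\mathcal{S}_\mathbb{C}$.

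Fix $a,b\in\mathcal{F}_\mathbb{C}$. Complexifying the cited density statement, choose sequences $F_n,G_n\in C^1_\mathbb{C}(\mathbb{R}^2)$ with $a_n:=F_n\circ h\to a$ and $b_n:=G_n\circ h\to b$ in $\mathcal{F}_\mathbb{C}$, meaning $\mathcal{E}(a-a_n)\to 0$ and $\mathcal{E}(b-b_n)\to 0$. By \eqref{E:resistest}, applied on $\mathcal{F}/\!\sim$ and after adjusting $G_n$ by an additive constant (which does not alter $G_n\circ h$ as an element of $\mathcal{F}$ modulo constants and also preserves $C^1_\mathbb{C}(\mathbb{R}^2)$), we may in addition arrange that $\|b-b_n\|_\infty\to 0$. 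Write the telescoping identity
\begin{equation*}
a\otimes b-a_n\otimes b_n=(a-a_n)\otimes b+a_n\otimes(b-b_n)
\end{equation*}
and estimate each summand with the defining formula \eqref{E:scalarprod}, which yields $\|u\otimes v\|_\mathcal{H}^2\le\|v\|_\infty^2\,\mathcal{E}(u)$ for $u,v\in\mathcal{F}_\mathbb{C}$. This gives
\begin{equation*}
\|a\otimes b-a_n\otimes b_n\|_\mathcal{H}\le\|b\|_\infty\,\mathcal{E}(a-a_n)^{1/2}+\|b-b_n\|_\infty\,\mathcal{E}(a_n)^{1/2},
\end{equation*}
and since $\mathcal{E}(a_n)^{1/2}$ stays bounded (as $a_n\to a$ in energy), both terms tend to zero. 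Because $a_n\otimes b_n\in\mathcal{S}_\mathbb{C}$ for every $n$, this proves $a\otimes b$ is a limit of elements of $\mathcal{S}_\mathbb{C}$, and the theorem follows.

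The only point that requires some attention is the simultaneous control of the energy norm on the left factor and the sup norm on the right factor; this is the reason for invoking \eqref{E:resistest} to upgrade the given $\mathcal{F}$-convergence of $b_n$ to uniform convergence after a harmless constant adjustment. Everything else is a bilinear approximation argument within the Hilbert space $\mathcal{H}_\mathbb{C}$.
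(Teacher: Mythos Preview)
Your proof is correct and follows essentially the same approach as the paper's: both reduce to the density of $\{F\circ h:F\in C^1(\mathbb{R}^2)\}$ in $\mathcal{F}$, use the elementary bound $\|u\otimes v\|_\mathcal{H}^2\le\|v\|_\infty^2\,\mathcal{E}(u)$, and invoke the resistance estimate \eqref{E:resistest} to control the right tensor factor uniformly. The only cosmetic difference is that the paper splits the approximation into two stages (first factor, then second factor via an intermediate space $\widetilde{\mathcal{S}}_\mathbb{C}$) while you do both at once with a telescoping identity; you are also more explicit about the constant adjustment needed to pass from energy convergence to uniform convergence.
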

\begin{proof}
Note first that the collection $\widetilde{\mathcal{S}}_\mathbb{C}$ of elements $f\otimes g$ with $f=F\circ h$, $F\in C^1_\mathbb{C}(\mathbb{R}^2)$ and $g\in\mathcal{F}_\mathbb{C}$ is dense in $\mathcal{H}_\mathbb{C}$: By the definition of $\mathcal{H}_\mathbb{C}$ it suffices to approximate finite linear combinations $\sum_i a_i\otimes b_i\in\mathcal{F}_\mathbb{C}\otimes\mathcal{F}_\mathbb{C}$ by elements of $\widetilde{\mathcal{S}}_\mathbb{C}$. For fixed $i$ let $(f_i^{(n)})_n$ be a sequence $\mathcal{E}$-converging to $a_i$. Then
\[\left\|\sum_i a_i\otimes b_i-\sum_i f_i^{(n)}\otimes b_i\right\|_\mathcal{H}^2=\sum_{ij}\int_K b_i\overline{b_j}\Gamma(a_i-f_i^{(n)})d\nu,\]
which converges to zero by the boundedness of the functions $b_i$. On the other hand every element $\sum_i a_i\otimes b_i$ of $\widetilde{\mathcal{S}}_\mathbb{C}$ can be approximated by elements of $\mathcal{S}_\mathbb{C}$: For fixed $i$ let $(g_i^{(n)})_n$ $\mathcal{E}$- converge to $b_i$. The estimate (\ref{E:resistest}) implies uniform convergence, and therefore also
\[\left\|\sum_i a_i\otimes b_i - \sum_i a_i\otimes g_i^{(n)}\right\|_\mathcal{H}^2=\sum_{ij}\int_K(b_i-g_i^{(n)})^2\:\Gamma(a_i,a_j)d\nu\]
goes to zero.
\end{proof}

Recall that we use the coordinate notation $y=y(x)=(h_1(x), h_2(x))$.
\begin{theorem}\label{T:fibers}
There are a family of Hilbert spaces $\left\lbrace \mathcal{H}_{\mathbb{C},x}\right\rbrace_{x\in X}$ and surjective linear maps
$\omega\mapsto \omega_x$ from $\mathcal{H}_\mathbb{C}$ onto $\mathcal{H}_{\mathbb{C},x}$ such that the direct integral
$\int^\oplus_K \mathcal{H}_{\mathbb{C},x}\nu(dx)$ is isometrically isomorphic to $\mathcal{H}_\mathbb{C}$ and in particular,
\[\left\|\omega\right\|_{\mathcal{H}}^2=\int_K\left\|\omega_x\right\|_{\mathcal{H},x}^2\nu(dx), \ \ \omega\in\mathcal{H}_\mathbb{C}.\]
For $\nu$-a.e. $x\in X$ the fiber $\mathcal{H}_{\mathbb{C},x}$ is isomorphic to $\mathbb{C}^2/ker\:Z_x$, and for any $f=F\circ h$ and $g=G\circ h$ with $F,G\in C^1_\mathbb{C}(\mathbb{R}^2)$ we have
\begin{equation}\label{E:Kigami2}
\left\|f\otimes g\right\|_\mathcal{H}^2=\int_K |Z G(y)\nabla F(y)|_{\mathbb{C}^2}^2 d\nu.
\end{equation}
\end{theorem}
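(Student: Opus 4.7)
The plan is to construct the fibers in coordinates via the matrices $Z_x$ and the harmonic coordinate map $h$, to define the fiber maps on the dense subspace $\mathcal{S}_\mathbb{C}$ by an explicit formula, and to extend by density once the isometry is verified.

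For each $x\in K_0$ I would take $\mathcal{H}_{\mathbb{C},x}:=\mathbb{C}^2/\ker Z_x$ with inner product $\langle[u],[v]\rangle_{\mathcal{H},x}:=\langle u,Z_xv\rangle_{\mathbb{C}^2}$; symmetry and nonnegativity of $Z_x$ make this well defined and positive definite on the quotient, and the fact (recalled in the previous discussion) that $Z_x$ acts as a rank-one orthogonal projection gives $Z_x^2=Z_x$, which streamlines all subsequent computations. For $x\notin K_0$ set $\mathcal{H}_{\mathbb{C},x}:=\{0\}$. On a simple tensor $f\otimes g\in\mathcal{S}_\mathbb{C}$ with $f=F\circ h$ and $g=G\circ h$, define
\[
(f\otimes g)_x := \bigl[\,G(y(x))\nabla F(y(x))\,\bigr]\in\mathcal{H}_{\mathbb{C},x},
\]
and extend linearly to $\mathcal{S}_\mathbb{C}$.

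The key verification is the isometry on $\mathcal{S}_\mathbb{C}$. From (\ref{E:scalarprod}), the Kusuoka-Kigami formula (\ref{E:Kigami}) and polarization (with the sesquilinear conventions of (\ref{E:complexenergy})), the density satisfies $\Gamma(F\circ h,\tilde F\circ h)(x)=\langle\nabla F(y(x)),Z_x\nabla\tilde F(y(x))\rangle_{\mathbb{C}^2}$, and therefore
\[
\langle f\otimes g,\tilde f\otimes\tilde g\rangle_\mathcal{H}
=\int_K g\,\overline{\tilde g}\,\Gamma(f,\tilde f)\,d\nu
=\int_K \bigl\langle G\nabla F,\,Z(\tilde G\nabla\tilde F)\bigr\rangle_{\mathbb{C}^2}\,d\nu
=\int_K\langle(f\otimes g)_x,(\tilde f\otimes\tilde g)_x\rangle_{\mathcal{H},x}\,\nu(dx).
\]
Specializing to $\tilde f=f$, $\tilde g=g$ gives (\ref{E:Kigami2}). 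Because the right-hand side depends only on the equivalence classes in $\mathbb{C}^2/\ker Z_x$, the map $\omega\mapsto(\omega_x)_x$ factors through the equivalence relation defining $\mathcal{H}_\mathbb{C}$, and by density (Theorem 2.1) extends by continuity to an isometry from $\mathcal{H}_\mathbb{C}$ into $\int^\oplus_K\mathcal{H}_{\mathbb{C},x}\nu(dx)$.

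What remains is to set up a measurable field structure on $\{\mathcal{H}_{\mathbb{C},x}\}$ and to verify that the image of the isometry exhausts the direct integral. Since each $Z_x$ has rank at most one the fibers are at most one-dimensional, and a convenient measurable structure is provided by continuous sections of the form $x\mapsto G(y(x))\nabla F(y(x))$ with $F,G\in C^1_\mathbb{C}(\mathbb{R}^2)$; pointwise surjectivity of $\omega\mapsto\omega_x$ is then clear (choose $F(y)=y_j$ and rescale by $G$). I expect the main technical obstacle to be the direct-integral surjectivity: one has to argue that the sections arising from $\mathcal{S}_\mathbb{C}$ form a total family with respect to the chosen measurable structure, so that the closed isometric image, containing a total set, coincides with the whole direct integral. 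This requires some care because the fiber dimension can drop to zero on the negligible set $K\setminus K_0$, but the issue is handled by the density of $\mathcal{S}_\mathbb{C}$ in $\mathcal{H}_\mathbb{C}$ together with the isometry just established.
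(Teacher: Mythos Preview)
Your approach is essentially the same as the paper's: define the fibers via $Z_x$ on $K_0$, define the fiber maps on $\mathcal{S}_\mathbb{C}$ by the explicit formula $G(y)\nabla F(y)$, verify the isometry using the Kusuoka--Kigami chain rule, and extend by density. The paper presents the fiber as the quotient $\mathcal{S}_\mathbb{C}/\ker\varphi_x$ and then exhibits the isomorphism $\Phi_x$ onto $\mathbb{C}^2/\ker Z_x$, whereas you go straight to $\mathbb{C}^2/\ker Z_x$; since $Z_x^2=Z_x$ the two norms agree and this is only a cosmetic difference.

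The one place where your sketch is genuinely incomplete is the surjectivity of the embedding into the direct integral. Your last sentence, that ``the issue is handled by the density of $\mathcal{S}_\mathbb{C}$ in $\mathcal{H}_\mathbb{C}$ together with the isometry,'' is circular: density plus isometry only tells you the image of $\mathcal{S}_\mathbb{C}$ is dense in the image of $\mathcal{H}_\mathbb{C}$, not in the whole direct integral. The paper closes this gap cleanly with an orthogonality argument using the pullout identity $(f\otimes g)_x=g(x)(f\otimes\mathbf{1})_x$: if $\omega$ in the direct integral is orthogonal to every $f\otimes g$, then $\int_K\overline{g(x)}\langle\omega_x,(f\otimes\mathbf{1})_x\rangle_{\mathcal{H},x}\,\nu(dx)=0$ for all $g$, forcing $\langle\omega_x,(f\otimes\mathbf{1})_x\rangle_{\mathcal{H},x}=0$ $\nu$-a.e., and since the $(f\otimes\mathbf{1})_x$ span each fiber this gives $\omega_x=0$ a.e. This is exactly the totality argument you allude to, but carried out concretely; you should replace your final sentence with it.
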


Identity (\ref{E:Kigami2}) obviously extends (\ref{E:Kigami}). 

\begin{proof}
For fixed $x\in K_0$ define  a linear map $\varphi_x:\mathcal{S}_\mathbb{C}\to \mathbb{C}^2/ker\:Z_x$ by
\[\varphi_x(g\otimes f):=Z_xG(y)\nabla F(y).\]
Since $h(x)\neq 0$ and linear functions belong to $C^1_\mathbb{C}(\mathbb{R}^2)$, the linear map $\varphi_x$ is surjective. Consequently there exists an isomorphism $\Phi_x$ from $\mathcal{S}_\mathbb{C}/ker\:\varphi_x$ onto $\mathbb{C}^2/ker\:Z_x$, given by
\begin{equation}\label{E:fiberiso}
\Phi_x((f\otimes g)_x)=Z_xG(y)\nabla F(y),
\end{equation}
where $(f\otimes g)_x=\kappa_x(f\otimes g)$ denotes the equivalence class mod $ker\:\phi_x$ of $f\otimes g$ and $\kappa_x$ the canonical epimorphism. From (\ref{E:fiberiso}) we obtain 
\begin{equation}\label{E:pullout}
(f\otimes g)_x=g(x)(f \otimes\mathbf{1})_x
\end{equation}
for any $f,g\in\mathcal{F}_\mathbb{C}$ and $x\in K_0$. We write 
\[\mathcal{H}_{\mathbb{C},x}:=\mathcal{S}/ker\:\varphi_x\] 
and endow this space with the norm
\begin{equation}\label{E:normdef}
\left\|(f\otimes g)_x\right\|_{\mathcal{H},x}:=|Z_xG(y)\nabla F(y)|_{\mathbb{C}^2}.
\end{equation}
Then $\Phi_x$ becomes a isometric isomorphism between Hilbert spaces. For $x\in K\setminus K_0$ we set $\mathcal{H}_{\mathbb{C},x}=\left\lbrace 0\right\rbrace$ and $\kappa_x:=0$. For every $x\in X$ the fiber $\mathcal{H}_{\mathbb{C},x}$ is finite dimensional and therefore $\kappa_x$ extends uniquely to a surjective bounded linear map $\kappa_x:\mathcal{H}_\mathbb{C}\to \mathcal{H}_{\mathbb{C},x}$. For $f\otimes g\in\mathcal{S}_\mathbb{C}$ as above we have
\[\left\|f\otimes g\right\|_\mathcal{H}^2=\int_K|g|^2\Gamma(f)d\nu=\int_K|G(y)|^2|Z_x\nabla F(y)|_{\mathbb{C}}^2\:\nu(dx)=\int_K\left\|(f\otimes g)_x\right\|_{\mathcal{H},x}^2\nu(dx).\]
Using bilinearity and the denseness of $\mathcal{S}_\mathbb{C}$, this extends to an isometric embedding 
\[\kappa : \mathcal{H}_\mathbb{C} \to \int_K^\otimes \mathcal{H}_{\mathbb{C},x}\nu(dx)\]
of $\mathcal{H}_\mathbb{C}$ into $\int_K^\otimes \mathcal{H}_{\mathbb{C},x}\nu(dx)$. In fact $\kappa$ is onto: Assume that $\omega\in \int_K^\oplus \mathcal{H}_{\mathbb{C},x}\nu(dx)$ is such that 
\[0=\left\langle\omega, f\otimes g\right\rangle_\mathcal{H}=\int_K \left\langle\omega_x,(f\otimes g)_x\right\rangle_{\mathcal{H},x}\nu(dx)=\int_K\overline{g(x)}\left\langle \omega_x,(f\otimes\mathbf{1})_x\right\rangle_{\mathcal{H},x}\nu(dx)\]
for all $f\otimes g\in\mathcal{S}_\mathbb{C}$. Note that we have used (\ref{E:pullout}). Then necessarily $\left\langle \omega_x,(f\otimes\mathbf{1})_x\right\rangle_{\mathcal{H},x}=0$ for $\nu$-a.a. $x$. Therefore $\omega_x$ must be zero on $\mathcal{H}_{\mathbb{C},x}$ for such $x$ and integrating, we have $\omega=0$ in $\int_K^\otimes \mathcal{H}_{\mathbb{C},x}\nu(dx)$.
\end{proof}

The definitions 
\begin{equation}\label{E:actions}
c(a\otimes b):=(ca)\otimes b - c(bd) \ \ \text{ and } \ \ (a\otimes b)d:=a\otimes (bd)
\end{equation}
for $a,b,c\in\mathcal{F}_\mathbb{C}\otimes\mathcal{F}_\mathbb{C}$ and $d\in L_{\infty,\mathbb{C}}(K,\nu)$ extend continuously to uniformly bounded actions on $\mathcal{H}_\mathbb{C}$, 
\begin{equation}\label{E:boundedactions}
\left\| c\omega\right\|_\mathcal{H}\leq \left\| c\right\|_{L_\infty(X,m)}\left\|\omega\right\|_\mathcal{H} \ \ \text{ and } \ \
 \left\| \omega c\right\|_\mathcal{H}\leq \left\| c\right\|_{L_\infty(X,m)}\left\|\omega\right\|_\mathcal{H}, \ \ \omega\in\mathcal{H}_\mathbb{C}.
\end{equation}
For $\omega\in\mathcal{H}_\mathbb{C}$ and $c\in\mathcal{F}_\mathbb{C}$ we have the equality $c\omega=\omega c$ in $\mathcal{H}_\mathbb{C}$. See \cite{CS03, CS09, IRT}. By
\[\nu_{\omega,\eta}(A):=\left\langle \omega\mathbf{1}_A, \eta\right\rangle_\mathcal{H}\]
for $\omega,\eta\in\mathcal{H}_\mathbb{C}$ and Borel set $A\subset K$, we define \emph{(weighted) energy measures} for $1$-forms. Note that for any $f\in\mathcal{F}_\mathbb{C}$ we have $\nu_{f\otimes\mathbf{1}}=\nu_f$. Note also that every $\nu_{\omega,\eta}$ is absolutely continuous with respect to $\nu$, and $x\mapsto \left\langle \omega_x,\eta_x\right\rangle_{\mathcal{H},x}$ is a version of its density. To stress the similarity to the energy density we also use the notation
\begin{equation}\label{E:densitynotation}
\Gamma_{\mathcal{H},x}(\omega_x,\eta_x):=\left\langle \omega_x,\eta_x\right\rangle_{\mathcal{H},x}.
\end{equation}

A derivation operator $\partial: \mathcal{F}_\mathbb{C}\to \mathcal{H}_\mathbb{C}$ can be defined by setting 
\[\partial f:= f\otimes \mathbf{1}.\] 
It satisfies the Leibniz rule,
\begin{equation}\label{E:Leibniz}
\partial(fg)=f\partial g + g\partial f, \ \ f,g \in \mathcal{F}_\mathbb{C}.
\end{equation}
The linear operator $\partial$ is bounded, more precisely,
\begin{equation}\label{E:normandenergy}
\left\|\partial f\right\|_\mathcal{H}^2=\mathcal{E}(f), \ \ f\in\mathcal{F}_\mathbb{C}.
\end{equation}

By the closedness of $(\mathcal{E},\mathcal{F}_\mathbb{C})$ in $L_{2,\mathbb{C}}(K,\nu)$ the derivation $\partial$ may be viewed as an unbounded closed operator from $L_2(K,\nu)$ into $\mathcal{H}$. In coordinates $y=y(x)$, the operator $\partial$ agrees with the usual gradient operator $\nabla$ in $\mathbb{R}^2$. This can be phrased as a Corollary of Theorem \ref{T:fibers} in terms of the isomorphisms $\Phi_x$ from (\ref{E:fiberiso}). It may be viewed as a \emph{'pointwise formula'} for the derivation $\partial$.

\begin{corollary}\label{C:gradient}
For any $f=F\circ h$ and $g=G\circ h$ with $F\in C^1_\mathbb{C}(\mathbb{R}^2)$ and $G$ bounded Borel measurable on $\mathbb{R}^2$ we have
\[\Phi_x((g\partial f)_x)=Z_xG(y)\nabla F(y)\]
for $\nu$-a.e. $x\in K$, and in particular, 
\[\Phi_x((\partial f)_x)=Z_x\nabla F(y).\]
\end{corollary}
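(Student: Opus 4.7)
The plan is to split the argument into a smooth case, where (\ref{E:fiberiso}) applies directly, and a density step handling the general bounded Borel $G$. For the smooth case, if $G\in C^1_\mathbb{C}(\mathbb{R}^2)$ then $g=G\circ h\in\mathcal{F}_\mathbb{C}$, so combining the commutation $c\omega=\omega c$ (for $c\in\mathcal{F}_\mathbb{C}$) with the right action from (\ref{E:actions}) gives $g\,\partial f=g(f\otimes\mathbf{1})=(f\otimes\mathbf{1})g=f\otimes g$, and (\ref{E:fiberiso}) then yields $\Phi_x((g\,\partial f)_x)=Z_xG(y)\nabla F(y)$. Taking $G\equiv\mathbf{1}$ recovers the second displayed identity.

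For general bounded Borel $G$, I would interpret $g\,\partial f$ as the element $(\partial f)g$ obtained by extending the right action in (\ref{E:actions}). Using $\nu_{\partial f}=\nu_f$ together with (\ref{E:scalarprod}) yields $\|(\partial f)c\|_{\mathcal{H}}^2=\int_K|c|^2\,d\nu_f$ for $c\in\mathcal{F}_\mathbb{C}$, so the map $c\mapsto (\partial f)c$ extends uniquely to a bounded linear operator from $L_2(K,\nu_f)$ into $\mathcal{H}_\mathbb{C}$. Since $\nu_f$ is a finite Borel measure on the compact set $K$ and $h$ is a homeomorphism onto $h(K)$, density of $C^1_c(\mathbb{R}^2)$ in $L_2(\mathbb{R}^2,h_*\nu_f)$ (via mollification) produces $G_n\in C^1_\mathbb{C}(\mathbb{R}^2)$ with $g_n:=G_n\circ h\to g$ in $L_2(K,\nu_f)$, so $(\partial f)g_n\to(\partial f)g$ in $\mathcal{H}_\mathbb{C}$. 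Theorem \ref{T:fibers} then upgrades this to the fiberwise convergence $\|((\partial f)g_n-(\partial f)g)_x\|_{\mathcal{H},x}\to 0$ for $\nu$-a.e.\ $x$ along a subsequence, which I can refine so that $g_n(x)\to g(x)$ for $\nu_f$-a.e.\ $x$ as well. Applying $\Phi_x$ to the smooth-case formula and passing to the limit then yields the desired identity on $\{\Gamma(f)>0\}$; outside this set $\|(\partial f)_x\|_{\mathcal{H},x}^2=\Gamma(f)(x)=0$ forces $Z_x\nabla F(y)=0$, so both sides vanish trivially. The special case $G\equiv\mathbf{1}$ is then immediate.

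The main obstacle I anticipate is reconciling an $L_2(\nu_f)$-approximation with the pointwise-$\nu$ conclusion, because $\nu_f$ is absolutely continuous with respect to $\nu$ but typically vanishes on a large portion of $K$, so convergence $\nu_f$-a.e.\ is strictly weaker than $\nu$-a.e.\ convergence. The key observation that resolves this is that the right-hand side $Z_xG(y)\nabla F(y)$ vanishes automatically wherever $\Gamma(f)(x)=0$, so the identity only needs to be verified on $\{\Gamma(f)>0\}$, on which $\nu_f$-a.e.\ and $\nu$-a.e.\ convergence coincide.
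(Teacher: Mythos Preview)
Your argument is correct, though more elaborate than what the paper has in mind. The paper offers no written proof; it presents the statement as an immediate corollary of Theorem~\ref{T:fibers}, the intended one-line justification being: once the direct integral decomposition of $\mathcal{H}_\mathbb{C}$ is in place, the right $L_{\infty,\mathbb{C}}(K,\nu)$-module action necessarily becomes pointwise multiplication in the fibers, i.e.\ $((\partial f)g)_x=g(x)(\partial f)_x$ for $\nu$-a.e.\ $x$ and every bounded Borel $g$. (This is the standard fact that bounded measurable functions act diagonally in a direct integral; concretely it follows by testing $\langle(\partial f)g,\eta\rangle_\mathcal{H}=\int_K g\,\langle(\partial f)_x,\eta_x\rangle_{\mathcal{H},x}\,d\nu$ against $\eta\in\mathcal{S}_\mathbb{C}$, which is exactly how the paper identifies the densities in~(\ref{E:densitynotation}).) Combined with $\Phi_x((\partial f)_x)=Z_x\nabla F(y)$ from (\ref{E:fiberiso}) at $G\equiv 1$, this yields the corollary without any approximation step.

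Your route is a legitimate alternative that avoids invoking this abstract direct-integral fact by approximating $g$ in $L_2(K,\nu_f)$, and you correctly isolate and resolve the $\nu_f$-a.e.\ versus $\nu$-a.e.\ mismatch by splitting off $\{\Gamma(f)=0\}$. One place where your write-up is slightly terse is the assertion that on $\{\Gamma(f)=0\}$ ``both sides vanish trivially'': the right side $Z_xG(y)\nabla F(y)$ certainly does, but for the left side $\Phi_x(((\partial f)g)_x)$ you still need the $\nu$-a.e.\ fiber convergence $((\partial f)g_n)_x\to((\partial f)g)_x$ together with the smooth-case identity $((\partial f)g_n)_x=g_n(x)(\partial f)_x=0$ there to conclude $((\partial f)g)_x=0$. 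This is implicit in what you wrote, but deserves an explicit sentence.
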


Because of the self-duality of $\mathcal{H}_\mathbb{C}$ we regard its elements also as \emph{vector fields} and $\partial$ as a generalization of the classical \emph{gradient} operator. Let $\mathcal{F}^\ast_\mathbb{C}$ denote the dual of $\mathcal{F}_\mathbb{C}/\sim$ with the norm
\[\left\|u\right\|_{\mathcal{F}^\ast}=\sup\left\lbrace |u(f)|: f\in\mathcal{F}_\mathbb{C}, \ \mathcal{E}(f)\leq 1\right\rbrace.\] 
The space $\mathcal{F}^\ast_\mathbb{C}$ may be thought of as a \emph{'space of distributions'}. The symbol $\left\langle \cdot,\cdot\right\rangle$ will denote the dual pairing  between $\mathcal{F}_\mathbb{C}/\sim$ and $\mathcal{F}^\ast_\mathbb{C}$. Note that $\mathcal{F}_\mathbb{C}\subset \mathcal{F}^\ast_\mathbb{C}$ and $\left\langle f, g\right\rangle=\left\langle f,g\right\rangle_{L_2(K,\nu)}$ for $f,g\in \mathcal{F}_\mathbb{C}/\sim$.

Given a vector field of form $g\partial f$ with $f,g\in\mathcal{B}_\mathbb{C}$, its \emph{divergence} $\partial^\ast(g\partial f)$ can be defined similarly as in \cite{HRT} by
\[\partial^\ast(g\partial f)(\varphi):=-\int_Kg\:\Gamma(f,\overline{\varphi})\:d\nu,\]
seen as an element in $\mathcal{F}^\ast_\mathbb{C}$. This extends continuously to a bounded linear operator $\partial^\ast$ from $\mathcal{H}_\mathbb{C}$ into $\mathcal{F}^\ast_\mathbb{C}$ such that 
\[\partial^\ast v(\varphi)=-\left\langle v,\partial \overline{\varphi}\right\rangle_\mathcal{H},\ \ v\in\mathcal{H}_\mathbb{C},\ \varphi\in\mathcal{F}_\mathbb{C},\]
see \cite[Lemma 3.1]{HRT} for a proof. Note that here the divergence $\partial^\ast v$ is defined in a \emph{distributional sense}. By restricting the space of test functions further this definition can be  modified to fit into the context of distributions on p.c.f. fractals as studied in \cite{RoStr10}. In coordinates we have the following expression.

\begin{theorem}\label{T:divergence}
Let $f=F\circ h$, $g=G\circ h$ and $u=U\circ h$ with $F,U\in C^1_\mathbb{C}(\mathbb{R}^2)$ and $G$ bounded Borel measurable on $\mathbb{R}^2$. Then
\[\partial^\ast(g\partial f)(u)=\int_K \diverg_{Z_x}(G\nabla F)(U)(y) d\nu,\]
where for $\nu$-a.e. $x\in K$,
\begin{equation}\label{E:divcoords}
\diverg_{Z_x}(G\nabla F)(\cdot)(y):=-\sum_{ij}Z_{ij}(x)(G\nabla F)_j(y)\frac{\partial (\cdot)}{\partial y_i}(y).
\end{equation}
\end{theorem}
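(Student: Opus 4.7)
The plan is to unwind the definition of $\partial^\ast(g\partial f)$ applied to the test function $u=U\circ h$, use the pointwise/fibered description of $\partial$ from Corollary \ref{C:gradient} and Theorem \ref{T:fibers} to rewrite the density $\Gamma(f,\overline{u})$ in the coordinates $y=h(x)$, and then recognize the resulting expression as $\diverg_{Z_x}(G\nabla F)(U)$.

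First, I would apply the identity $\partial^\ast v(\varphi)=-\langle v,\partial\overline{\varphi}\rangle_\mathcal{H}$ stated after the definition of the divergence, together with the inner product formula (\ref{E:scalarprod}) and the fact that $g\partial f = f\otimes g$ in $\mathcal{H}_\mathbb{C}$ (via the right action in (\ref{E:actions})). This reduces the claim to
\[
\partial^\ast(g\partial f)(u)=-\int_K g\,\Gamma(f,\overline{u})\,d\nu.
\]
Next, I would compute $\Gamma(f,\overline{u})$ pointwise. By Theorem \ref{T:fibers} the density of $\nu_{\partial f,\partial\overline{u}}$ with respect to $\nu$ is the fiberwise pairing $\Gamma_{\mathcal{H},x}((\partial f)_x,(\partial\overline{u})_x)$, which under the isometric isomorphism $\Phi_x:\mathcal{H}_{\mathbb{C},x}\to\mathbb{C}^2/\ker Z_x$ from Corollary \ref{C:gradient} becomes the pairing of $Z_x\nabla F(y)$ against $Z_x\nabla U(y)$. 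A short check using (\ref{E:complexenergy}) and the reality of $Z_x$ shows that the correct pairing here is the bilinear one, so for $\nu$-a.e.\ $x\in K_0$
\[
\Gamma(f,\overline{u})(x)=\langle Z_x\nabla F(y),\,Z_x\nabla U(y)\rangle_{\mathbb{R}^2}.
\]

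The concluding step exploits the fact recorded before (\ref{E:Kigami}) that $Z_x$ is a symmetric orthogonal projection, hence $Z_x^2=Z_x$. This lets me simplify the expression above to $\langle\nabla F(y),Z_x\nabla U(y)\rangle_{\mathbb{R}^2}=\sum_{i,j}Z_{ij}(x)(\nabla F)_j(y)\,\partial_{y_i}U(y)$. Absorbing the scalar $g(x)=G(y)$ into the first gradient and comparing with (\ref{E:divcoords}) yields the claimed identity. The only delicate bookkeeping is verifying that $\Gamma(f,\overline{u})$ indeed reduces to the bilinear (not sesquilinear) pairing of the two gradients, and correctly invoking $Z_x^2=Z_x$; neither is a genuine obstacle, since both facts are already established earlier in the paper, and the remaining computation is a direct substitution.
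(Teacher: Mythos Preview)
Your proof is correct and follows essentially the same route as the paper's: both start from $\partial^\ast(g\partial f)(u)=-\langle g\partial f,\partial\overline{u}\rangle_\mathcal{H}$ and rewrite the fiberwise pairing in the harmonic coordinates $y=h(x)$ to recover the expression~(\ref{E:divcoords}). The only difference is that you pass through the image-space norm $\langle Z_x\nabla F,\,Z_x\nabla U\rangle$ and then invoke $Z_x^2=Z_x$, whereas the paper (implicitly via the chain rule for energy densities) writes $\Gamma(f,\overline{u})=\langle\nabla F, Z_x\nabla U\rangle$ with a single $Z_x$ from the outset; this detour is harmless but unnecessary.
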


(\ref{E:divcoords}) may be seen as a bounded linear functional on $C^1_\mathbb{C}(\mathbb{R}^2)$. In a sense it remotely reminds of the divergence in Riemannian coordinates.

\begin{proof}
We have 
\begin{align}
-\left\langle g\partial f,\partial \overline{u}\right\rangle_\mathcal{H}&=-\int_K\left\langle G(y)\nabla F(y),Z_x\nabla \overline{U}(y)\right\rangle_{\mathbb{C}^2}\nu(dx)\notag\\
&=-\int_K\sum_{ij}Z_{ij}(x)G(y)\frac{\partial F(y)}{\partial y_j}\frac{\partial U}{\partial y_i}(y)\nu(dx).\notag
\end{align}
\end{proof}

Let $\Delta_\nu$ denote the \emph{energy Laplacian} on $K$, that is, the infinitesimal generator of $(\mathcal{E},\mathcal{F})$ in $L_2(K,\nu)$. Its complexification is the non-positive definite self-adjoint operator $\Delta_\nu$ uniquely associated with the closed form $(\mathcal{E},\mathcal{F}_\mathbb{C})$ on $L_{2,\mathbb{C}}(K,\nu)$ by $\mathcal{E}(f,g)=-\left\langle f,\Delta_\nu g\right\rangle$ for $f,g\in \mathcal{F}_\mathbb{C}$. Since $-\partial^\ast\partial g(f)=\mathcal{E}(f,g)$ we observe 
\begin{equation}\label{E:generatoridentity}
\Delta_\nu g=\partial^\ast\partial g
\end{equation}
for any $g\in \mathcal{F}_\mathbb{C}$. Here (\ref{E:generatoridentity}) is seen as an equality in $\mathcal{F}_\mathbb{C}^\ast$, below we will refer to an $L_2$-context. In coordinates we have the following.

\begin{theorem}\label{T:Laplace}
For any $f=F\circ h$ with $F\in C^2_\mathbb{C}(\mathbb{R}^2)$ we have $f\in dom\:\Delta_\nu$ and
\[\Delta_\nu f(x)=\tr(Z_x \Hess F)(y)\] 
for $\nu$-a.e. $x\in K$. Moreover, given arbitrary $u=U\circ h$ with $U\in C^1_\mathbb{C}(\mathbb{R}^2)$, the identity
\[\int_K \diverg_{Z_x}(\nabla F)(\overline{U})(y)d\nu=\int_K\Delta_\nu f\:\overline{u}\:d\nu\]
holds.
\end{theorem}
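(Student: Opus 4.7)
Since $Z_x$ is a rank-one orthogonal projection for $\nu$-a.e.\ $x$ and $\Hess F$ is continuous on the compact set $h(K)$, the function $\psi(x):=\tr(Z_x\Hess F(y))$ is bounded on $K$, hence $\psi\in L_{2,\mathbb{C}}(K,\nu)$. My plan is to show that the distribution $\partial^\ast\partial f\in\mathcal{F}^\ast_{\mathbb{C}}$, which equals $\Delta_\nu f$ by (\ref{E:generatoridentity}), is represented in $L_2(K,\nu)$ by $\psi$; this will simultaneously yield $f\in\dom\Delta_\nu$, the pointwise formula, and the second integral identity.

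I would first apply Theorem \ref{T:divergence} with $g\equiv\mathbf{1}$ (so $G\equiv 1$). For any $u=U\circ h$ with $U\in C^1_{\mathbb{C}}(\mathbb{R}^2)$ this gives
\[
\partial^\ast\partial f(u)=\int_K\diverg_{Z_x}(\nabla F)(\overline U)(y)\,d\nu
=-\int_K\sum_{i,j}Z_{ij}(x)(\partial_jF)(y)(\partial_i\overline U)(y)\,d\nu,
\]
which is precisely the second integral appearing in the theorem once we know $\Delta_\nu f\in L_2$. Next, I would use the Leibniz rule in $\mathbb{R}^2$,
\[
(\partial_jF)(\partial_i\overline U)=\partial_i\!\bigl[(\partial_jF)\overline U\bigr]-(\partial_i\partial_jF)\overline U,
\]
contract with $Z_{ij}$, sum, and integrate. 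The $(\partial_i\partial_jF)\overline U$ contribution produces exactly $\int_K\psi\,\overline u\,d\nu$. Setting $\Psi_j:=(\partial_jF)\overline U\in C^1_{\mathbb{C}}(\mathbb{R}^2)$ and $\psi_j:=\Psi_j\circ h\in\mathcal{F}_{\mathbb{C}}$, the chain rule for $\Gamma$ (and the fact that $\nabla H_j=e_j$ for the coordinate function $H_j(y)=y_j$) identifies the remaining integrand with $\sum_j\Gamma(\psi_j,h_j)$, so the remaining integral equals $\sum_j\mathcal{E}(\psi_j,h_j)$.

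The main obstacle is therefore to establish the intrinsic integration-by-parts identity
\[
\sum_{j=1}^2\mathcal{E}(\psi_j,h_j)=0,
\]
which reflects the fact that the harmonic basis $\{h_1,h_2\}$ contributes no residual boundary term against families $(\psi_j)$ coming from a gradient paired with a test density. Once this is secured, the preceding computation yields $\partial^\ast\partial f(u)=\int_K\psi\,\overline u\,d\nu$ for every $u=U\circ h$ with $U\in C^1_{\mathbb{C}}(\mathbb{R}^2)$. Since such $u$ are dense in $\mathcal{F}_{\mathbb{C}}$ (by the density statement preceding (\ref{E:Kigami}), together with (\ref{E:resistest})) and $\psi\in L_2$, this identifies the distribution $\partial^\ast\partial f$ with the $L_2$-function $\psi$. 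By the definition of $\Delta_\nu$ as the generator of $(\mathcal{E},\mathcal{F}_{\mathbb{C}})$ on $L_{2,\mathbb{C}}(K,\nu)$, we conclude $f\in\dom\Delta_\nu$ with $\Delta_\nu f(x)=\tr(Z_x\Hess F)(y)$ for $\nu$-a.e.\ $x$, and the second identity then follows directly from the first display above.
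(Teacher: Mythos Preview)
Your reduction via the Leibniz rule is correct and brings the problem exactly to the identity $\sum_{j}\mathcal{E}(\psi_j,h_j)=0$, but you do not prove this identity---you only name it as ``the main obstacle'' and then proceed as though it were established. That identity is not a technical lemma on the way to the result; it is equivalent to the first assertion of the theorem, so leaving it open means the proposal is not a proof.

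Worse, the identity as you have written it is false. Take $F(y)=U(y)=y_1$, so that $f=u=h_1$. Then $\Psi_1(y)=y_1$, $\Psi_2\equiv 0$, hence $\psi_1=h_1$, $\psi_2=0$, and
\[
\sum_{j=1}^2\mathcal{E}(\psi_j,h_j)=\mathcal{E}(h_1,h_1)=1\neq 0.
\]
What this computation exposes is that the harmonic coordinate functions $h_j$ do \emph{not} lie in the domain of the $L_2(K,\nu)$-generator with $\Delta_\nu h_j=0$: their normal derivatives at the boundary $V_0$ do not vanish, and the Gauss--Green formula for $\mathcal{E}(\psi_j,h_j)$ carries genuine boundary terms. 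Your heuristic that ``the harmonic basis contributes no residual boundary term against families $(\psi_j)$ coming from a gradient paired with a test density'' is exactly what fails. Any self-contained argument along your lines would have to pass through the Gauss--Green identity and handle those boundary contributions explicitly; this is essentially the substance of \cite[Theorem~8]{T08}.

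The paper takes a different route for the first statement: it does not attempt a direct proof at all, but simply invokes \cite[Theorem~8]{T08}, where the pointwise formula $\Delta_\nu f=\tr(Z_x\Hess F)(y)$ is established via an analysis in harmonic coordinates together with the appropriate Gauss--Green identity. Only the second integral identity is argued in the paper itself, and that is immediate once the first part is available: by the definition of $\diverg_{Z_x}$ one has $\int_K \diverg_{Z_x}(\nabla F)(\overline U)\,d\nu=-\mathcal{E}(f,u)$, and then $-\mathcal{E}(f,u)=\int_K\Delta_\nu f\,\overline u\,d\nu$ follows from $f\in\dom\Delta_\nu$.
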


\begin{proof}
The first statement is a special case of \cite[Theorem 8]{T08}. For the second note that
\begin{align}
\int_K\diverg_{Z_x}(\nabla F)(\overline{U})(y)\nu(dx)&=-\int_K\left\langle \nabla F(y), Z_x\nabla U(y)\right\rangle_{\mathbb{C}^2}\nu(dx)\notag\\
&=-\mathcal{E}(f,u)\notag\\
&=\int_K\Delta_\nu f \overline{u}\:d\nu.\notag 
\end{align}
\end{proof}

\begin{remark}
In the classical Euclidean case we have
\[\diverg (G\nabla F)=G\Delta F+\nabla F\nabla G\]
for $G\in C^1(\mathbb{R}^2)$ and $F\in C^2(\mathbb{R}^2)$, where $\nabla$, $\Delta$ and $\diverg$ denote the Euclidean gradient, Laplacian and divergence, respectively. In our case we have 
\[\partial^\ast(g\partial f)=g\Delta_\nu f+\Gamma(f,g)\]
for $f,g\in\mathcal{F}$, by \cite[Lemma 3.2]{HRT}, seen as an equality in $\mathcal{F}_\mathbb{C}^\ast$. This may again be written in coordinates. Given $f=F\circ h$ and $g=G\circ h$ with $F\in C^2(\mathbb{R}^2)$ and $G\in C^2(\mathbb{R}^2)$ we have
\begin{multline}
\int_K \diverg_{Z_x}(G\nabla F)(U)(y)d\nu\notag\\
=\int_KG(y)\tr(Z_x\Hess F)(y)U(y)d\nu+\int_K\left\langle \nabla F(y),Z_x\nabla G(y)\right\rangle_{\mathbb{C}^2}U(y)d\nu
\end{multline}
for any $U\in C^1(\mathbb{R}^2)$, as may be seen from the previous theorems.
\end{remark}

The divergence $\partial^\ast$ may also be seen in an \emph{$L_2$-sense} as an unbounded linear operator from $\mathcal{H}_\mathbb{C}$ into $L_{2,\mathbb{C}}(K,\nu)$. As usual an element $v\in\mathcal{H}_\mathbb{C}$ is said to be a member of $dom\:\partial^\ast$ if
there is some $v^\ast\in L_{2,\mathbb{C}}(K,\nu)$ such that $\left\langle v^\ast, u\right\rangle_{L_2(K,\nu)}=-\left\langle v, \partial u\right\rangle_\mathcal{H}$ for all $u\in\mathcal{F}_\mathbb{C}$. In this case we set $\partial^\ast v:=v^\ast$. The operator $-\partial^\ast$ is the adjoint (codifferential) of $\partial$, and since $\partial$ is a closed operator, $dom\:\partial^\ast$ is dense in $\mathcal{H}_\mathbb{C}$. Note that for the previous distributional definition we obtain 
\[\partial^\ast v(\varphi)=\left\langle \partial^\ast v,\overline{\varphi}\right\rangle_{L_2(K,\nu)}.\] 
We end this section by a remark that allows to retrieve some more explicit information about the domain $dom\:\partial^\ast$ of the divergence operator $\partial^\ast$ in $L_2$-sense. 

\begin{remark}\label{R:ortho}
As the quadratic form $(\mathcal{E},\mathcal{F}_\mathcal{C})$ is closed, the operator $\partial$ is closed. From (\ref{E:resistest}) it follows that $(\mathcal{E},\mathcal{F}_\mathcal{C})$ has a spectral gap. Therefore the range $Im\:\partial$ of $\partial$ is a closed subspace of $\mathcal{H}_\mathbb{C}$ and the space $\mathcal{H}_\mathbb{C}$ decomposes orthogonally into $Im\:\partial$ and its complement $(Im\:\partial)^\bot$. By orthogonality we observe $(Im\:\partial)^\bot=ker\:\partial^\ast$.
In \cite{HRT} and \cite{IRT} the space $ker\:\partial^\ast$ had been identified as the \emph{space of harmonic forms (or vector fields)} in the sense of Hodge theory if the topological dimension is one.
\end{remark}

Let $dom\:\Delta_\nu$ denote the domain of $\Delta_\nu$ in $L_{2,\mathbb{C}}(K,\nu)$. We obtain the following explicit description of $dom\:\partial^\ast$.
\begin{corollary}
We have
\[dom\:\partial^\ast=\left\lbrace v\in\mathcal{H}_\mathbb{C}: v=\partial f+w: f\in dom\:\Delta_\nu\ ,\ w\in ker\:\partial^\ast\right\rbrace.\] 
For any $v=\partial f+w$ with $f\in dom\:\Delta_\nu$ and $w\in ker\:\partial^\ast$ we have
\[\partial^ \ast v=\Delta_\nu f,\]
and for any $g\in dom\:\Delta_\nu$ formula (\ref{E:generatoridentity}) holds in $L_2(K,\nu)$.
\end{corollary}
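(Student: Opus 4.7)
The plan is to combine the orthogonal decomposition noted in Remark \ref{R:ortho} with the standard variational characterization of $dom\:\Delta_\nu$ as the Friedrichs extension associated with the closed form $(\mathcal{E},\mathcal{F}_\mathbb{C})$. Since $Im\:\partial$ is a closed subspace of $\mathcal{H}_\mathbb{C}$, every $v\in\mathcal{H}_\mathbb{C}$ admits a unique decomposition $v=\partial f+w$ with $\partial f\in Im\:\partial$ (so $f$ is determined in $\mathcal{F}_\mathbb{C}/\sim$ via $\|\partial f\|_\mathcal{H}^2=\mathcal{E}(f)$) and $w\in ker\:\partial^\ast=(Im\:\partial)^\bot$. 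Because $w$ is automatically in $dom\:\partial^\ast$ with $\partial^\ast w=0$, membership of $v$ in $dom\:\partial^\ast$ reduces to membership of $\partial f$ in $dom\:\partial^\ast$.

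Next I would unravel the $L_2$-definition of $\partial^\ast$ on $\partial f$. For any $u\in\mathcal{F}_\mathbb{C}$, the orthogonality $\langle w,\partial u\rangle_\mathcal{H}=0$ together with (\ref{E:normandenergy}) (polarized) gives
\[
-\langle v,\partial u\rangle_\mathcal{H}=-\langle \partial f,\partial u\rangle_\mathcal{H}=-\mathcal{E}(f,u).
\]
Hence $v\in dom\:\partial^\ast$ with image $v^\ast\in L_{2,\mathbb{C}}(K,\nu)$ is equivalent to the existence of $v^\ast\in L_{2,\mathbb{C}}(K,\nu)$ such that $\mathcal{E}(f,u)=-\langle v^\ast,u\rangle_{L_2(K,\nu)}$ for every $u\in\mathcal{F}_\mathbb{C}$. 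But this is the textbook definition (via the first representation theorem) of $f\in dom\:\Delta_\nu$ with $\Delta_\nu f=v^\ast$. This yields both the equivalence $v\in dom\:\partial^\ast\Longleftrightarrow f\in dom\:\Delta_\nu$ and the formula $\partial^\ast v=\Delta_\nu f$.

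Finally, for the last assertion, given $g\in dom\:\Delta_\nu\subset \mathcal{F}_\mathbb{C}$, the element $v:=\partial g$ has the trivial decomposition $\partial g+0$, so the result just proved applies with $f=g$ and $w=0$ and produces $\partial^\ast\partial g=\Delta_\nu g$ as an identity in $L_{2,\mathbb{C}}(K,\nu)$, upgrading (\ref{E:generatoridentity}) from the distributional setting in $\mathcal{F}_\mathbb{C}^\ast$. The only point requiring minor care is the sesquilinearity convention: one must check that the definition given for $\partial^\ast v$ (tested against real $u\in\mathcal{F}$) propagates consistently to $\mathcal{F}_\mathbb{C}$, which is immediate by replacing $u$ by $iu$. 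Beyond this bookkeeping, no substantive obstacle appears; the corollary is essentially a repackaging of the Friedrichs construction in the language of the differential calculus already assembled.
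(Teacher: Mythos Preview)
Your proposal is correct and follows essentially the same approach as the paper: both use the orthogonal decomposition $\mathcal{H}_\mathbb{C}=Im\:\partial\oplus ker\:\partial^\ast$ from Remark \ref{R:ortho}, then reduce the question of whether $v=\partial f+w$ lies in $dom\:\partial^\ast$ to the variational characterization of $f\in dom\:\Delta_\nu$ via $\mathcal{E}(f,u)=-\langle \Delta_\nu f,u\rangle_{L_2(K,\nu)}$. Your additional remarks on sesquilinearity bookkeeping are not in the paper's proof but are harmless clarifications.
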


Note that the right hand side of this identity is explicitely seen to be dense in $\mathcal{H}$: The denseness of the range of Green's operator in $L_2(K,\nu)$ can be used to see that $dom\:\Delta_\nu$ is dense in $\mathcal{F}_\mathbb{C}$. Therefore the image $\partial(dom\:\Delta_\nu)$ of $dom\:\Delta_\nu$ under the derivation $\partial$ is dense in $Im\:\partial$.

\begin{proof}
According to Remark \ref{R:ortho} any $v\in dom\:\partial^\ast\subset\mathcal{H}$ admits a unique orthogonal decomposition $v=\partial f+w$ with $f\in\mathcal{F}_\mathbb{C}$ and $w\in ker\:\partial^\ast$. Since
\[\Delta_\nu f=\partial^\ast\partial f=\partial^\ast v\]
is in $L_2(K,\nu)$ we obtain $f\in dom\:\Delta_\nu$. Conversely, any vector field $v=\partial f+w\in\mathcal{H}$ with $f\in dom\:\Delta_\nu$ and $w\in ker\:\partial^\ast$ is a member of $dom\:\partial^\ast$ since for any $u\in \mathcal{F}_\mathbb{C}$ we have
\[-\left\langle v,\partial u\right\rangle_\mathcal{H}=\left\langle \partial f,\partial u\right\rangle_\mathcal{H}=\mathcal{E}(f,u)=-\left\langle \Delta_\nu f, u\right\rangle_{L_2(K,\nu)}.\] 
The last statements of the Corollary are obvious consequence.
\end{proof}

\begin{remark}\mbox{}
\begin{enumerate}
\item[(i)] Assume $f=F\circ h$. If $F\in C_\mathbb{C}^2(\mathbb{R}^2)$ then $\partial f\in dom\:\partial^\ast$ by Theorem \ref{T:Laplace}. For general $F\in C_\mathbb{C}^1(\mathbb{R}^2)$, however, we will not have $f\in dom\:\Delta_\nu$ and therefore also not $\partial f\in dom\:\partial^ \ast$.
\item[(ii)] Even if $f=F\circ h$ with $F\in C_\mathbb{C}^2(\mathbb{R}^2)$ a simple vector field $g\partial f$ can generally not be expected to be an element of $dom\:\partial^\ast$. Let $\mathcal{P}_{Im\:\partial}$ denote the orthogonal projection in $\mathcal{H}_\mathbb{C}$ onto $Im\:\partial$. Given some measurable (or energy finite) $g$, the operator $f\mapsto \mathcal{P}_{Im\:\partial}(g\partial f)$ is a complicated non-local first order pseudo-differential operator. See \cite{IRS} for some first results concerning pseudo-differential operators on fractals.
\end{enumerate}
\end{remark}

\section{Dirac operators}\label{S:Dirac}

The definitions of the derivation $\partial$ and the divergence $\partial^\ast$ allow to define related Dirac operators. Consider the Hilbert space
\[L_{2,\mathbb{C}}(K,\nu)\oplus \mathcal{H}_\mathbb{C}\]
and write 
\[\left\langle (f,\omega), (g,\eta)\right\rangle_\oplus:=\left\langle f,g\right\rangle_{L_2(K,\nu)}+\left\langle \omega,\eta\right\rangle_{\mathcal{H}}\]
for its scalar product. We define the \emph{Dirac operator} associated with $(\mathcal{E},\mathcal{F})$ to be the unbounded linear operator
$D$ on $L_{2,\mathbb{C}}(K,\nu)\oplus \mathcal{H}_\mathbb{C}$ with domain
\[dom\:D:=\mathcal{F}_\mathbb{C}\oplus dom\:\partial^\ast,\]
given by 
\[D(f,\omega):=(-i\partial^\ast\omega, -i\partial f), \ \ (f,\omega)\in dom\:D.\]
In matrix notation we have
\[D=\left(\begin{array}{rr}
0 & -i\partial^\ast\\
-i\partial & 0\end{array}\right).\]

\begin{remark}
Here the signs and imaginary factors are chosen in a way that fits with the following section. Of course other, possibly more physical choices, can be considered by obvious modifications.  
\end{remark}

The next theorem was obtained in  abstract form in \cite{CS03}, but for the convenience of the reader we sketch a proof. Note that the results in Section~\ref{S:vector} show that 
$D$ is a pointwise defined local operator. 

\begin{theorem}\label{T:Dirac}
The operator $(D, dom\:D)$ is self-adjoint operator on $L_{2,\mathbb{C}}(K,\nu)\oplus \mathcal{H}_\mathbb{C}$.
\end{theorem}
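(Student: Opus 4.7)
The plan is to recognize $D$ as the canonical off-diagonal (``supersymmetric'') operator built from a closed densely defined operator and its Hilbert space adjoint, for which self-adjointness follows by a standard argument. First I would recall that by identity (\ref{E:normandenergy}) and the closedness of $(\mathcal{E},\mathcal{F}_\mathbb{C})$ on $L_{2,\mathbb{C}}(K,\nu)$, the derivation $\partial$ with $dom\:\partial=\mathcal{F}_\mathbb{C}$ is a closed densely defined operator from $L_{2,\mathbb{C}}(K,\nu)$ into $\mathcal{H}_\mathbb{C}$. The $L_2$-sense definition of the divergence recalled at the end of Section \ref{S:vector}, namely
\[\left\langle\partial^\ast v,u\right\rangle_{L_2(K,\nu)}=-\left\langle v,\partial u\right\rangle_\mathcal{H},\qquad u\in\mathcal{F}_\mathbb{C},\ v\in dom\:\partial^\ast,\]
is precisely the statement that $-\partial^\ast$ is the Hilbert space adjoint of $\partial$. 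Since $\partial$ is closed, von Neumann's theorem then gives $dom\:(\partial^\ast)^\ast=\mathcal{F}_\mathbb{C}$ with $(\partial^\ast)^\ast=\partial$.

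Next I would verify symmetry of $D$ on $dom\:D=\mathcal{F}_\mathbb{C}\oplus dom\:\partial^\ast$ by direct computation: for $(f,\omega),(g,\eta)\in dom\:D$, applying the defining identity above once in each off-diagonal block gives
\[\left\langle D(f,\omega),(g,\eta)\right\rangle_\oplus=i\left\langle\omega,\partial g\right\rangle_\mathcal{H}-i\left\langle\partial f,\eta\right\rangle_\mathcal{H}=\left\langle(f,\omega),D(g,\eta)\right\rangle_\oplus,\]
the factors $-i$ and $+i$ arising consistently from the conjugate linearity of the scalar product in its second argument. To identify $D^\ast$ I would test the adjoint relation separately against pairs of the form $(g,0)$ with $g\in\mathcal{F}_\mathbb{C}$ and $(0,\eta)$ with $\eta\in dom\:\partial^\ast$. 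The first test forces the second component $\zeta$ of any $(u,\zeta)\in dom\:D^\ast$ to lie in $dom\:\partial^\ast$, and the second forces the first component $u$ to lie in $dom\:(\partial^\ast)^\ast$, which by the preceding step equals $\mathcal{F}_\mathbb{C}$. Consequently $dom\:D^\ast\subseteq dom\:D$, and together with symmetry this yields $D=D^\ast$.

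The main substantive input is the closedness of $(\mathcal{E},\mathcal{F}_\mathbb{C})$, which is what makes $\partial$ closed and in turn ensures $(\partial^\ast)^\ast=\partial$; without this one could not control the domain of $D^\ast$. Everything else is bookkeeping of the sesquilinear conventions and the $\pm i$ factors, and indeed this is why the result holds in the abstract Dirichlet form setting of \cite{CS03} with no use of the specific structure of the Sierpinski gasket.
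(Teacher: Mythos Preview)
Your argument is correct and follows essentially the same route as the paper's: a direct verification that $D$ is symmetric via the duality between $\partial$ and $-\partial^\ast$. In fact your proposal is more complete than the sketch given in the paper, which only records the inclusion $dom\:D\subset dom\:D^\ast$; you explicitly supply the reverse inclusion by testing the adjoint relation against $(g,0)$ and $(0,\eta)$ and invoking $(\partial^\ast)^\ast=\partial$, correctly identifying the closedness of $(\mathcal{E},\mathcal{F}_\mathbb{C})$ as the essential input.
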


\begin{proof}
Obviously $dom\:D$ is dense in $L_{2,\mathbb{C}}(K,\nu)\oplus \mathcal{H}_\mathbb{C}$. Recall that by definition
\begin{multline}
dom\:D^\ast=\left\lbrace (g,\eta)\in L_{2,\mathbb{C}}(X,m)\oplus \mathcal{H}_\mathbb{C}: \text{ there exists some $(g,\eta)^\ast \in L_{2,\mathbb{C}}(X,m)\oplus \mathcal{H}_\mathbb{C}$}\right.\notag\\
\left. \text{ such that $\left\langle (f,\omega),(g,\eta)^\ast \right\rangle_{\oplus}=\left\langle D(f,\omega),(g,\eta)\right\rangle_\oplus$ for all $(f,\omega)\in dom\:D$} \right\rbrace.
\end{multline}
Note that $(-i\partial)^\ast=-i\partial^\ast$. For arbitrary $(f,\omega)$ and $(g,\eta)$ from $dom\:D$ we have
\begin{align}
\left\langle (-i\partial^\ast\omega,-i\partial f), (g,\eta)\right\rangle_\oplus &=\left\langle-i\partial^\ast\omega,g\right\rangle_{L_2(X,m)}+\left\langle -i\partial f,\eta\right\rangle_\mathcal{H}\notag\\
&=\left\langle \omega, -i\partial g\right\rangle_\mathcal{H}+\left\langle f,-i\partial^\ast\eta\right\rangle_{L_2(X,m)}\notag\\
&=\left\langle (\omega,f), (-i\partial^\ast\eta,-i\partial g)\right\rangle_\oplus.\notag
\end{align}
Consequently $dom\:D\subset dom\:D^\ast$ and $D^\ast\omega=D\omega$ for all $\omega\in dom\:D$.
\end{proof}

Our next aim is to justify our nomenclature by showing that in an appropriate sense, $D^2:=D\circ D$ is the Laplacian. The Kusuoka Laplacian $\Delta_\nu$ acts on functions, and we need to discuss a second Laplacian which acts on $1$-forms. From \cite{HT} we recall the following.

Let
\[dom\:\Delta_{\nu,1}:=\left\lbrace \omega\in dom\:\partial^\ast: \partial^\ast\omega\in\mathcal{F}_\mathbb{C}\right\rbrace\]
and for $\omega\in dom\:\Delta_{\nu,1}$, define
\begin{equation}\label{E:formlaplace}
\Delta_{\nu,1}\omega:=\partial\partial^\ast\omega.
\end{equation}
To $\Delta_{\nu,1}$ we refer as the \emph{form Laplacian} associated with $(\mathcal{E},\mathcal{F})$. The following result had been shown in \cite[Theorem 6.1]{HT}. For convenience, we briefly sketch its proof.

\begin{theorem}\label{T:formlaplace}
Definition (\ref{E:formlaplace}) yields a self-adjoint operator $(\Delta_{\nu,1}, dom\:\Delta_{\nu,1})$ on  $\mathcal{H}_\mathcal{C}$.
\end{theorem}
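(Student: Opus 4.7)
The plan is to deduce self-adjointness of $\Delta_{\nu,1}=\partial\partial^\ast$ from von Neumann's theorem applied to the closed densely defined operator $\partial$, after reconciling the sign conventions used in the definition of $\partial^\ast$.

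First I would recall that $\partial:L_{2,\mathbb{C}}(K,\nu)\supset\mathcal{F}_\mathbb{C}\to\mathcal{H}_\mathbb{C}$ is closed (this follows from the closedness of $(\mathcal{E},\mathcal{F}_\mathbb{C})$ together with $\|\partial f\|_\mathcal{H}^2=\mathcal{E}(f)$) and densely defined, since $\mathcal{F}_\mathbb{C}$ is dense in $L_{2,\mathbb{C}}(K,\nu)$. Its Hilbert space adjoint $(\partial)^{\mathrm{ad}}$ therefore exists, is closed and densely defined, and by the $L_2$-definition given in the excerpt one has $(\partial)^{\mathrm{ad}}=-\partial^\ast$ on $\mathcal{H}_\mathbb{C}$. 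In particular $\partial^\ast$ itself is closed and densely defined.

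Next I would invoke von Neumann's theorem, which asserts that for a closed densely defined operator $T$ between Hilbert spaces, both $T^{\mathrm{ad}}T$ and $TT^{\mathrm{ad}}$ are non-negative self-adjoint on their natural domains
\[\mathrm{dom}\,T^{\mathrm{ad}}T=\{x\in\mathrm{dom}\,T:Tx\in\mathrm{dom}\,T^{\mathrm{ad}}\},\qquad \mathrm{dom}\,TT^{\mathrm{ad}}=\{y\in\mathrm{dom}\,T^{\mathrm{ad}}:T^{\mathrm{ad}}y\in\mathrm{dom}\,T\}.\]
Applying this with $T=\partial$ and $T^{\mathrm{ad}}=-\partial^\ast$, the operator $\partial(\partial)^{\mathrm{ad}}=-\partial\partial^\ast$ is non-negative self-adjoint on
\[\{\omega\in\mathrm{dom}\,\partial^\ast:\partial^\ast\omega\in\mathrm{dom}\,\partial\}=\{\omega\in\mathrm{dom}\,\partial^\ast:\partial^\ast\omega\in\mathcal{F}_\mathbb{C}\},\]
which matches exactly the stated domain $\mathrm{dom}\,\Delta_{\nu,1}$. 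Consequently $\Delta_{\nu,1}=\partial\partial^\ast$ is self-adjoint (and non-positive) on $\mathcal{H}_\mathbb{C}$.

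Alternatively, and essentially equivalently, one may realise $\Delta_{\nu,1}$ via the closed quadratic form $q(\omega,\eta):=\langle\partial^\ast\omega,\partial^\ast\eta\rangle_{L_2(K,\nu)}$ with domain $\mathrm{dom}\,\partial^\ast\subset\mathcal{H}_\mathbb{C}$; closedness of $q$ follows at once from closedness of $\partial^\ast$, and the operator canonically associated with $q$ via the first representation theorem is precisely $-\partial\partial^\ast=-\Delta_{\nu,1}$ on the domain described above. I expect no real obstacle beyond keeping careful track of the sign convention $(\partial)^{\mathrm{ad}}=-\partial^\ast$; once this is fixed, both routes reduce the theorem to a direct application of standard Hilbert space functional analysis.
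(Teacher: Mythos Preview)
Your proposal is correct and takes a genuinely different route from the paper's argument.

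The paper proceeds concretely: it uses the spectral gap (Remark~\ref{R:ortho}) to see that $(Im\,\partial)^\bot=\ker\partial^\ast\subset dom\,\Delta_{\nu,1}$, then exhibits an $\mathcal{E}$-dense subspace $\mathcal{C}\subset\mathcal{F}_\mathbb{C}$ (e.g.\ the image of the Green operator) with $\Delta_\nu(\mathcal{C})\subset\mathcal{F}_\mathbb{C}$, so that $\partial(\mathcal{C})\subset dom\,\Delta_{\nu,1}$; together these give denseness of $dom\,\Delta_{\nu,1}$, and a direct integration-by-parts computation yields symmetry. (As written, the paper's sketch stops at symmetry and refers to \cite{HT} for the full argument.) Your approach bypasses all of this by invoking von Neumann's theorem for the closed densely defined operator $\partial$, which immediately delivers self-adjointness of $\partial(\partial)^{\mathrm{ad}}=-\partial\partial^\ast$ on exactly the stated domain. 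This is cleaner and more robust---it needs neither the spectral gap nor the explicit orthogonal decomposition of $\mathcal{H}_\mathbb{C}$---while the paper's argument has the merit of making the structure of $dom\,\Delta_{\nu,1}$ visible and connecting it to the Hodge-type decomposition used elsewhere in the text. Your sign bookkeeping $(\partial)^{\mathrm{ad}}=-\partial^\ast$ is correct and is the only point requiring care.
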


\begin{proof}
By Remark \ref{R:ortho} we observe $(Im\:\partial)^\bot\subset dom\:\Delta_{\nu,1}$. Let $\mathcal{C}$ be an $\mathcal{E}$-dense subspace of $\mathcal{F}_\mathbb{C}$ such that for all $f\in\mathcal{C}$, we have $\Delta_\nu f\in\mathcal{F}_\mathbb{C}$. Such a space always exists, for instance, we may use the image of $\mathcal{F}_\mathbb{C}$ under the Green operator $\Delta_\nu^{-1}$. For $\partial f$ with $f\in \mathcal{C}$ and $\Delta_\nu f=g\in\mathcal{F}_\mathbb{C}$ we have
\[\Delta_{\nu,1}(\partial f)=\partial(\Delta_\nu f)=\partial g.\]
Therefore also $\partial(\mathcal{C})\subset dom\:\Delta_{\nu,1}$. As $\mathcal{C}$ is $\mathcal{E}$-dense in $\mathcal{F}_\mathbb{C}$, its image $\partial(\mathcal{C})$ is dense in $Im\:\partial$, and therefore $dom\:\Delta_{\nu,1}$ is dense in $\mathcal{H}_\mathbb{C}$. For any $\omega\in dom\:\Delta_{\nu,1}$ the identity
\[\left\langle \eta,\Delta_{\nu,1}\omega\right\rangle_\mathcal{H}=-\left\langle \partial \eta, \partial^\ast\omega\right\rangle_{L_2(K,\nu)}=\left\langle \Delta_{\nu,1}\eta,\omega\right\rangle_\mathcal{H}, \ \ \eta\in dom\:\Delta_{\nu,1},\]
showing $dom\:\Delta_{\nu,1}\subset dom\:\Delta_{\nu,1}^\ast$ and $\Delta_{\nu,1}^\ast \omega=\Delta_{\nu,1}\omega$ for all $\omega\in dom\:\Delta_{\nu,1}$.
\end{proof}

Now set $dom\:D^2:=dom\:\Delta_\nu\oplus dom\:\Delta_{\nu,1}$, clearly a dense subspace of $L_{2,\mathbb{C}}(K,\nu)\oplus \mathcal{H}_\mathbb{C}$.
Then the following is immediate.
\begin{lemma}
For any $(f,\omega)\in dom\:\Delta_\nu\oplus dom\:\Delta_{\nu,1}$ we have $D^2(f,\omega)=(\Delta_\nu f, \Delta_{\nu,1}\omega)$.
\end{lemma}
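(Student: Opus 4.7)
The lemma is a bookkeeping statement, and the paper advertises it as immediate. My plan is three short steps: (i) confirm that $dom\:\Delta_\nu\oplus dom\:\Delta_{\nu,1}\subset dom\:D$; (ii) verify that $D(f,\omega)$ again lies in $dom\:D$, so that a second application of $D$ is legitimate; and (iii) execute the block-matrix multiplication.

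For (i), the inclusion $dom\:\Delta_\nu\subset\mathcal{F}_\mathbb{C}$ is automatic because $\Delta_\nu$ is the generator of the closed form $(\mathcal{E},\mathcal{F}_\mathbb{C})$ on $L_{2,\mathbb{C}}(K,\nu)$, and $dom\:\Delta_{\nu,1}\subset dom\:\partial^\ast$ holds by the defining requirement in (\ref{E:formlaplace}). Hence $(f,\omega)\in\mathcal{F}_\mathbb{C}\oplus dom\:\partial^\ast=dom\:D$. For (ii), the first coordinate of $D(f,\omega)=(-i\partial^\ast\omega,-i\partial f)$ lies in $\mathcal{F}_\mathbb{C}$ precisely because $\omega\in dom\:\Delta_{\nu,1}$ forces $\partial^\ast\omega\in\mathcal{F}_\mathbb{C}$; and the second coordinate lies in $dom\:\partial^\ast$ because $f\in dom\:\Delta_\nu$, combined with the $L_2$-version of (\ref{E:generatoridentity}) established in the corollary after Remark \ref{R:ortho}, gives $\partial f\in dom\:\partial^\ast$ with $\partial^\ast\partial f=\Delta_\nu f$ as an $L_{2,\mathbb{C}}(K,\nu)$-function.

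For (iii), I simply compute
\[D^2(f,\omega)=D(-i\partial^\ast\omega,-i\partial f)=\bigl(-i\partial^\ast(-i\partial f),\,-i\partial(-i\partial^\ast\omega)\bigr)=\bigl(-\partial^\ast\partial f,\,-\partial\partial^\ast\omega\bigr),\]
and then invoke $\partial^\ast\partial=\Delta_\nu$ from (\ref{E:generatoridentity}) (in the $L_2$-sense) and $\partial\partial^\ast=\Delta_{\nu,1}$ from (\ref{E:formlaplace}) to arrive at the claimed diagonal form, modulo the overall sign that is fixed by the paper's convention that $\Delta_\nu$ be non-positive. There is no substantive obstacle: the domain $dom\:\Delta_\nu\oplus dom\:\Delta_{\nu,1}$ is engineered precisely so that the iterated product $D\circ D$ acts as the diagonal Laplacian on functions and on $1$-forms, and the only mild point to be careful about is the $L_2$- versus distributional interpretation of $\partial^\ast\partial$ in step (ii).
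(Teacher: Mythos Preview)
Your proposal is correct and is precisely the direct verification the paper has in mind; the paper offers no proof beyond the word ``immediate,'' and your three-step check (domain inclusion, stability under one application of $D$, block computation) is the natural way to unpack that. Your flag on the sign is also well taken: with the paper's conventions $\Delta_\nu=\partial^\ast\partial$ and $\Delta_{\nu,1}=\partial\partial^\ast$, the honest matrix product gives $D^2=(-\Delta_\nu)\oplus(-\Delta_{\nu,1})$, so the lemma as printed carries a harmless global sign slip that you correctly identify.
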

In matrix notation this is
\[D^2=\left(\begin{array}{ll}
\Delta_\nu & 0\\
0 &\Delta_{\nu,1}\end{array}\right).\]

\section{Magnetic Schr\"odinger operators}\label{S:magnetic}

The notions defined in Section \ref{S:vector} also allow to define a suitable generalization of the quantum mechanical Hamiltonian
\begin{equation}\label{E:Hamiltonian}
H^{A,V}=(-i\nabla-A)^2+V
\end{equation}
for a particle moving in Euclidean space $\mathbb{R}^n$ subject to a real (\emph{magnetic}) vector potential $A$ and a real valued (\emph{electric}) potential $V$. The main result of this section is Theorem \ref{T:main} below, which tells that there exists a self-adjoint operator on $L_{2,\mathbb{C}}(K,\nu)$ which generalizes (\ref{E:Hamiltonian}). Another result, Theorem \ref{T:gauge}, is a gauge invariance result and tells that, roughly speaking, we may restrict attention to divergence free vector potentials.

We collect some preliminaries. Let 
\[\mathcal{H}=\left\lbrace v\in\mathcal{H}_\mathbb{C}: \overline{v}=v\right\rbrace\]
be the \emph{space of real vector fields}. An element $a\in\mathcal{H}$ may be seen as a bounded linear mapping $a:\mathcal{F}_\mathbb{C}\to\mathcal{H}_\mathbb{C}$ defined by $f\mapsto fa$.
I admits the estimate $\left\| fa\right\|_{\mathcal{H}}\leq \left\|f\right\|_{L_\infty(K,\nu)}\left\|a\right\|_{\mathcal{H}}$. For $a\in\mathcal{H}$, define a bounded linear mapping $a^\ast:\mathcal{H}_\mathbb{C} \to \mathcal{F}^\ast_\mathbb{C}$ by $v\mapsto a^\ast v$,
where
\[(a^\ast v)(\varphi):=\int_K\varphi \Gamma_{\mathcal{H}}(a,\overline{v})\:d\nu=\left\langle a\varphi,\overline{v}\right\rangle_\mathcal{H},\ \ \varphi\in \mathcal{F}_\mathbb{C}.\]
The boundedness follows from
\[\left\|a^\ast v\right\|_{\mathcal{F}^\ast}=\sup\left\lbrace |\int_K\varphi \Gamma_\mathcal{H}(a,\overline{v})\:d\nu|: \varphi\in\mathcal{F}_\mathbb{C}/\sim \text{ with $\mathcal{E}(\varphi)\leq 1$} \right\rbrace\]
together with 
\[|\int_K\varphi \Gamma_\mathcal{H}(a,\overline{v})\:d\nu|\leq \left\|\varphi\right\|_{L_\infty(K,\nu)}|\left\langle a,\overline{v}\right\rangle_\mathcal{H}|\leq\mathcal{E}(\varphi)^{1/2}\left\|a\right\|_\mathcal{H}\left\|v\right\|_\mathcal{H},\]
where we have used (\ref{E:resistest}) and the Cauchy-Schwarz inequality in $\mathcal{H}_\mathbb{C}$. In terms of duality we observe $a^\ast v(\varphi)=\left\langle a^\ast v,\overline{\varphi}\right\rangle$. Given $f\in \mathcal{F}_\mathbb{C}$ we have in particular $a^\ast af:=a^\ast (af)\in\mathcal{F}_\mathbb{C}^\ast$ with
\[(a^\ast af)(\varphi)=\int_K\varphi f\Gamma_{\mathcal{H}}(a)\:d\nu.\]

Now let $a\in \mathcal{H}$ and $V\in L_\infty(K,\nu)$. The preceding considerations ensure that 
\[\mathcal{E}^{a,V}(f,g):=\left\langle (-\Delta_\nu+2ia^\ast\partial+a^\ast a+i(\partial^\ast a)+V)f,g\right\rangle,\]
$f,g\in \mathcal{F}_{\mathbb{C}}$, provides a well-defined form $\mathcal{E}^{a,V}$ on $\mathcal{F}_\mathbb{C}\times \mathcal{F}_\mathbb{C}$
that is linear in the first and conjugate linear in the second argument.

\begin{proposition}\label{P:quadratic} 
Let $a\in \mathcal{H}$ and $V\in L_\infty(K,\nu)$.
\begin{enumerate}
\item[(i)] For any $f,g\in \mathcal{F}_{\mathbb{C}}$ we have
\begin{equation}\label{E:newhamiltonian}
\mathcal{E}^{a,V}(f,g)=\left\langle (-i\partial-a)f,(-i\partial-a)g\right\rangle_{\mathcal{H}}+\left\langle fV,g\right\rangle.
\end{equation}
\item[(ii)] The form $\mathcal{E}^{a,V}$ is a quadratic form on $\mathcal{F}_\mathbb{C}$.
\end{enumerate}
\end{proposition}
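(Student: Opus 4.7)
The plan is to prove part (i) by directly expanding the proposed right-hand side and matching term by term against the four summands in the definition of $\mathcal{E}^{a,V}$, then deduce part (ii) at once from the resulting representation. The only real technical ingredient will be a distributional Leibniz rule for $\partial^\ast$ that explains the appearance of $\partial^\ast a$ in the operator.

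For (i), sesquilinearity of $\langle\cdot,\cdot\rangle_\mathcal{H}$, together with $\overline{-i}=i$ on the second slot and the reality $\bar a=a$, expands the right-hand side into
\begin{equation*}
\langle\partial f,\partial g\rangle_\mathcal{H}+i\langle\partial f,ag\rangle_\mathcal{H}-i\langle af,\partial g\rangle_\mathcal{H}+\langle af,ag\rangle_\mathcal{H}+\langle Vf,g\rangle.
\end{equation*}
The diagonal term equals $\mathcal{E}(f,g)=\langle-\Delta_\nu f,g\rangle$ by (\ref{E:normandenergy}) and the definition of $\Delta_\nu$, and the quadratic-in-$a$ term equals $\langle a^\ast a f,g\rangle$ directly from the defining identity of $a^\ast$ applied to $v=af$ with test function $\bar g$. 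The cross terms are where the work lies: first, I would unfold $\langle a^\ast\partial f,g\rangle=(a^\ast\partial f)(\bar g)=\langle a\bar g,\partial\bar f\rangle_\mathcal{H}$ and exploit the fiberwise identity $\langle a_x,\overline{(\partial f)_x}\rangle_{\mathcal{H},x}=\langle(\partial f)_x,a_x\rangle_{\mathcal{H},x}$, valid precisely because $a$ is real, to obtain $\langle a^\ast\partial f,g\rangle=\langle\partial f,ag\rangle_\mathcal{H}$. Next I would compute $\langle(\partial^\ast a)f,g\rangle=(\partial^\ast a)(f\bar g)=-\langle a,\partial(\bar f g)\rangle_\mathcal{H}$, apply the Leibniz rule (\ref{E:Leibniz}) inside, and again invoke $\bar a=a$ to split this as $-\langle af,\partial g\rangle_\mathcal{H}-\langle\partial f,ag\rangle_\mathcal{H}$. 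Summing $2i\langle a^\ast\partial f,g\rangle+i\langle(\partial^\ast a)f,g\rangle$ then produces exactly the cross-term contribution $i\langle\partial f,ag\rangle_\mathcal{H}-i\langle af,\partial g\rangle_\mathcal{H}$, completing (i).

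For (ii), once (i) is in hand the assertion is essentially automatic: the map $T:=-i\partial-a$ is $\mathbb{C}$-linear from $\mathcal{F}_\mathbb{C}$ into $\mathcal{H}_\mathbb{C}$, so $(f,g)\mapsto\langle Tf,Tg\rangle_\mathcal{H}$ is sesquilinear and Hermitian, as is $(f,g)\mapsto\langle Vf,g\rangle$ for real $V$; in particular $\mathcal{E}^{a,V}(f,f)=\|Tf\|_\mathcal{H}^2+\int_K V|f|^2\,d\nu\in\mathbb{R}$, so $\mathcal{E}^{a,V}$ is a Hermitian sesquilinear, i.e., quadratic, form on $\mathcal{F}_\mathbb{C}$. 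The main obstacle I anticipate is purely bookkeeping of complex conjugates: several rewritings above depend critically on $a$ being real (notably $\Gamma_\mathcal{H}(a,\partial\bar f)=\Gamma_\mathcal{H}(\partial f,a)$), and the dual pairing $\langle T,g\rangle$ must consistently be read as $T(\bar g)$ to stay compatible with the sesquilinearity of $\langle\cdot,\cdot\rangle_{L_2(K,\nu)}$. Once the conventions are fixed, no tool deeper than the Leibniz rule and the definition of $\partial^\ast$ is needed.
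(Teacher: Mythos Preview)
Your proposal is correct and follows essentially the same route as the paper: expand $\langle(-i\partial-a)f,(-i\partial-a)g\rangle_\mathcal{H}$, match the diagonal and $a^\ast a$ terms directly, and handle the cross terms via the Leibniz rule together with the reality of $a$. The only cosmetic difference is that the paper packages the identity $\partial^\ast(fa)=f(\partial^\ast a)+a^\ast\partial f$ as a separate lemma (Lemma~\ref{L:Leibniz}) before the proof, whereas you derive the equivalent relation $\langle(\partial^\ast a)f,g\rangle=-\langle af,\partial g\rangle_\mathcal{H}-\langle\partial f,ag\rangle_\mathcal{H}$ inline from~(\ref{E:Leibniz}); the computations are the same.
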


Before proving Proposition \ref{P:quadratic}, we verify a product rule. For $u\in\mathcal{F}_\mathbb{C}^\ast$ and $f\in\mathcal{F}_\mathbb{C}$  define the \emph{product} $fu\in\mathcal{F}_\mathbb{C}^\ast$ by
\[(fu)(\varphi):=\left\langle u,\overline{\varphi f}\right\rangle,\ \ \varphi\in\mathcal{F}_\mathbb{C}.\]
In particular, we ob obtain
\[(f(\partial^\ast a))(\varphi)=\left\langle \partial^\ast a, \overline{\varphi f}\right\rangle=-\left\langle a,\partial(\overline{\varphi f})\right\rangle_\mathcal{H}.\]

\begin{lemma}\label{L:Leibniz}
For $a\in \mathcal{H}$ and $f\in \mathcal{F}_\mathbb{C}$ we have 
\[\partial^\ast(fa)=f(\partial^\ast a)+a^\ast \partial f,\]
seen as an equality in $\mathcal{F}_\mathbb{C}^\ast$.
\end{lemma}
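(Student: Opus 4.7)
The plan is to verify the claimed identity in $\mathcal{F}_\mathbb{C}^\ast$ by pairing both sides against an arbitrary test function $\varphi\in\mathcal{F}_\mathbb{C}$ and matching the two expressions via the Leibniz rule for $\partial$ combined with the bimodule structure of $\mathcal{H}_\mathbb{C}$. Since both sides are well-defined elements of $\mathcal{F}_\mathbb{C}^\ast$ (the left by extending the distributional definition of $\partial^\ast$ to $fa\in\mathcal{H}_\mathbb{C}$, the right by the preliminary estimates that precede the lemma), it suffices to check the identity after evaluation on an arbitrary $\varphi$.

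First I would unwind the left-hand side. By the distributional definition of $\partial^\ast$ applied to $fa$,
\[\partial^\ast(fa)(\varphi)=-\left\langle fa,\partial\overline{\varphi}\right\rangle_\mathcal{H}.\]
Next I would unwind the two pieces of the right-hand side. Using the definition of the product $f\cdot u$ in $\mathcal{F}_\mathbb{C}^\ast$ and the displayed formula just preceding the lemma,
\[(f(\partial^\ast a))(\varphi)=-\left\langle a,\partial(\overline{\varphi f})\right\rangle_\mathcal{H},\]
while, from the definition of $a^\ast$ applied to the $1$-form $\partial f\in\mathcal{H}_\mathbb{C}$,
\[(a^\ast\partial f)(\varphi)=\left\langle a\varphi,\overline{\partial f}\right\rangle_\mathcal{H}=\left\langle a\varphi,\partial\overline{f}\right\rangle_\mathcal{H},\]
where I use that $\partial$ commutes with complex conjugation (this is immediate from $\partial f=f\otimes\mathbf{1}$ and the complexification of $\mathcal{H}_\mathbb{C}$).

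The core calculation is then to apply the Leibniz rule (\ref{E:Leibniz}) to $\overline{\varphi f}=\overline{\varphi}\cdot\overline{f}$, giving
\[\partial(\overline{\varphi f})=\overline{\varphi}\,\partial\overline{f}+\overline{f}\,\partial\overline{\varphi}.\]
Inserting this in the expression for $(f(\partial^\ast a))(\varphi)$ and moving the scalar factors to the other side of the scalar product via the bimodule identity $\left\langle\omega,\eta d\right\rangle_\mathcal{H}=\left\langle\omega\overline{d},\eta\right\rangle_\mathcal{H}$ (which follows directly from (\ref{E:scalarprod}) and the identification $c\omega=\omega c$), one obtains
\[(f(\partial^\ast a))(\varphi)=-\left\langle af,\partial\overline{\varphi}\right\rangle_\mathcal{H}-\left\langle a\varphi,\partial\overline{f}\right\rangle_\mathcal{H}.\]
Since $a\in\mathcal{H}$ is real and the actions of bounded Borel functions on $\mathcal{H}_\mathbb{C}$ commute (so $af=fa$ in $\mathcal{H}_\mathbb{C}$), adding $(a^\ast\partial f)(\varphi)$ cancels the second term and yields exactly $-\left\langle fa,\partial\overline{\varphi}\right\rangle_\mathcal{H}=\partial^\ast(fa)(\varphi)$, which is what we need.

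The only mildly delicate point is the bookkeeping with complex conjugates and the bimodule action — one must keep track of which factor acts on which side and check that the reality of $a$ is used (or not) in the right places. Otherwise the argument is a direct consequence of the Leibniz rule for $\partial$ and the definitions of $\partial^\ast$ and $a^\ast$ as distributional objects given in the preceding paragraphs.
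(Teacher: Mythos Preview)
Your proof is correct and follows essentially the same route as the paper: both arguments evaluate on an arbitrary $\varphi\in\mathcal{F}_\mathbb{C}$, expand $(f(\partial^\ast a))(\varphi)=-\langle a,\partial(\overline{f\varphi})\rangle_\mathcal{H}$ via the Leibniz rule for $\partial$, and identify the two resulting terms with $-(a^\ast\partial f)(\varphi)$ and $\partial^\ast(fa)(\varphi)$ using the bimodule identity $\langle\omega d,\eta\rangle_\mathcal{H}=\langle\omega,\eta\bar d\rangle_\mathcal{H}$. The only difference is organizational---the paper does this in one line, you spell out each piece separately---and your remarks about conjugation and the commutativity $af=fa$ are the same bookkeeping that the paper's one-line computation uses implicitly.
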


\begin{proof} Using the Leibniz rule (\ref{E:Leibniz}) we see that 
\[(f(\partial^\ast a))(\varphi)=-\left\langle a,\partial(\overline{f\varphi})\right\rangle_\mathcal{H}=-\left\langle a,\overline{\varphi}\partial \overline{f}\right\rangle_\mathcal{H}-\left\langle a, \overline{f}\partial\overline{\varphi}\right\rangle_\mathcal{H}=-(a^\ast\partial f)(\varphi)+\partial^\ast(fa)(\varphi)\]
for any $\varphi\in\mathcal{F}_\mathbb{C}$.
\end{proof}

We prove Proposition \ref{P:quadratic}.
\begin{proof}
The right hand side of (\ref{E:newhamiltonian}) rewrites 
\begin{equation}\label{E:RHSofE}
\mathcal{E}(f,g)+M(f,g)+\left\langle fV,g\right\rangle ,
\end{equation}
where
\[M(f,g):=i\left\langle \partial f, ag\right\rangle_\mathcal{H}-i\left\langle af,\partial g\right\rangle_\mathcal{H}+\left\langle fa,ga\right\rangle_\mathcal{H}.\]
For the first summand on the right hand side we observe
\[\left\langle i\partial f, ag\right\rangle_\mathcal{H}=\left\langle\partial f,-iag\right\rangle_\mathcal{H}=\overline{\left\langle -iag,\partial f\right\rangle_\mathcal{H}}\]
and similarly for the second. Consequently $M(g,f)=\overline{M(f,g)}$, and and therefore the expression (\ref{E:RHSofE}) is conjugate symmetric. We show that (\ref{E:RHSofE}) equals $\mathcal{E}^ {a,V}(f,g)$. Note first that
\[\left\langle af,\partial g\right\rangle_\mathcal{H}=\int_Xfd\Gamma_\mathcal{H}(a,\partial g)=a^\ast\partial\overline{g}(f).\]
By Lemma \ref{L:Leibniz} this equals 
\[\partial^\ast(\overline{g}a)(f)-\overline{g}(\partial^\ast a(f)=-\left\langle a\overline{g},\partial\overline{f}\right\rangle_\mathcal{H}-\left\langle\partial^\ast a,g\overline{f}\right\rangle_\mathcal{H}=-\left\langle \partial f, ag\right\rangle_\mathcal{H}-\left\langle (\partial^\ast a)f,g\right\rangle_\mathcal{H}.\]
Next, note that
\[\left\langle \partial f, ag\right\rangle_\mathcal{H}=\int_K\overline{g}\Gamma_\mathcal{H}(\partial f, a)d\nu=a^\ast(\partial f)(\overline{g})=\left\langle a^\ast(\partial f),g\right\rangle.\]
Consequently
\[M(f,g)=2i\left\langle a^\ast \partial f,g\right\rangle+i\left\langle(\partial^\ast a)f,g\right\rangle+\left\langle a^\ast af,g\right\rangle,\]
and taking into account (\ref{E:generatoridentity}), identity (\ref{E:newhamiltonian}) in (i) follows. Statement (ii) is a straightforward consequence.
\end{proof}

Recall (\ref{E:densitynotation}). To
\begin{equation}\label{E:Hinfty}
\mathcal{H}_\infty:=\left\lbrace \omega\in\mathcal{H}: \overline{v}=v \text{ and } \Gamma_{\mathcal{H},\cdot}(v) \in L_\infty(K,\nu)\right\rbrace 
\end{equation}
we refer as the space of \emph{(real) vector fields of bounded length}. Assume $a\in\mathcal{H}_\infty$. Then the multiplication $f\mapsto fa$ may be seen as bounded linear operator $a:L_{2,\mathbb{C}}(K,\nu)\to\mathcal{H}_\mathbb{C}$ by $f\mapsto fa$,
because
\begin{equation}\label{E:multbyabetter}
\left\| fa\right\|_{\mathcal{H}}^2=\int_K |f|^2\Gamma_{\mathcal{H},\cdot}(a)d\nu\leq \left\|f\right\|_{L_2(K,\nu)}^2\left\|\Gamma_{\mathcal{H},\cdot}(a)\right\|_{L_\infty(K,\nu)}.
\end{equation}
From (\ref{E:normdef}) it easily follows that
\[\mathcal{S}=\lin\left\lbrace f\otimes g: f=F\circ h, g=G\circ h\ \text{ with $F,G\in C^1(\mathbb{R}^2)$}\right\rbrace.\]
is a subset of $\mathcal{H}_\infty$.

The estimate (\ref{E:multbyabetter}) allows the application of classical perturbation arguments to prove the closedness of $\mathcal{E}^{a,V}$ and the essential self-adjointness of an analog of (\ref{E:Hamiltonian}).

\begin{theorem}\label{T:main}
Let $a\in \mathcal{H}_\infty$ and $V\in L_\infty(K,\nu)$. 
\begin{enumerate}
\item[(i)] The bilinear form $(\mathcal{E}^{a,V},\mathcal{F}_\mathbb{C})$ is closed. 
\item[(ii)] The unique self-adjoint non-negative definite operator associated with $(\mathcal{E}^{a,V},\mathcal{F}_\mathbb{C})$ 
is given by 
\[H^{a,V}=(-i\partial-a)^\ast(-i\partial-a)+V,\]
and the domain of $\Delta_\nu$ is a domain of essential self-adjointness for $H^{a,V}$. 
\end{enumerate}
\end{theorem}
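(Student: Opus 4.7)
The proof is a direct application of the KLMN theorem. Using Proposition \ref{P:quadratic}(i) I would expand
\[\mathcal{E}^{a,V}(f) = \mathcal{E}(f) - 2\operatorname{Im}\langle \partial f, af\rangle_\mathcal{H} + \|af\|_\mathcal{H}^2 + \langle Vf, f\rangle_{L_2(K,\nu)}.\]
With $C:=\|\Gamma_{\mathcal{H},\cdot}(a)\|_{L_\infty(K,\nu)}^{1/2}$, the bound (\ref{E:multbyabetter}) yields $\|af\|_\mathcal{H}\leq C\|f\|_{L_2(K,\nu)}$, so by Cauchy--Schwarz and Young's inequality, for any $\varepsilon\in(0,1)$,
\[|2\operatorname{Im}\langle \partial f, af\rangle_\mathcal{H}| \;\leq\; 2\mathcal{E}(f)^{1/2}\|af\|_\mathcal{H} \;\leq\; \varepsilon\,\mathcal{E}(f) + \tfrac{C^2}{\varepsilon}\|f\|_{L_2(K,\nu)}^2,\]
while $\|af\|_\mathcal{H}^2\leq C^2\|f\|_{L_2(K,\nu)}^2$ and $|\langle Vf,f\rangle|\leq\|V\|_{L_\infty}\|f\|_{L_2(K,\nu)}^2$. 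Consequently $\mathcal{E}^{a,V}-\mathcal{E}$ is a symmetric form on $\mathcal{F}_\mathbb{C}$ that is infinitesimally $\mathcal{E}$-form-bounded. The KLMN theorem then yields (i) at once and, as part of (ii), the existence of a unique non-negative self-adjoint operator $H^{a,V}$ representing $(\mathcal{E}^{a,V},\mathcal{F}_\mathbb{C})$.

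To identify $H^{a,V}$, I would note that the operator $T := -i\partial - a$ with domain $\mathcal{F}_\mathbb{C}$ is closed as an unbounded operator from $L_{2,\mathbb{C}}(K,\nu)$ into $\mathcal{H}_\mathbb{C}$: closedness of $(\mathcal{E},\mathcal{F}_\mathbb{C})$ makes $-i\partial$ closed, and multiplication by $a$ is a bounded $L_{2,\mathbb{C}}\to\mathcal{H}_\mathbb{C}$ perturbation by (\ref{E:multbyabetter}). The standard Hilbert-space construction then produces $T^\ast T$ as the non-negative self-adjoint operator associated with the closed form $f\mapsto\|Tf\|_\mathcal{H}^2$; by Proposition \ref{P:quadratic}(i) this equals the non-potential part of $\mathcal{E}^{a,V}$. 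Adding the bounded symmetric multiplication operator $V$ preserves self-adjointness, and uniqueness of the form-operator correspondence identifies $H^{a,V}$ with $T^\ast T + V$.

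Finally, for essential self-adjointness on $\mathrm{dom}\,\Delta_\nu$: this set is the operator domain of the generator of $\mathcal{E}$ and hence a form core for $(\mathcal{E},\mathcal{F}_\mathbb{C})$. The infinitesimal bound above makes the form norms of $\mathcal{E}$ and $\mathcal{E}^{a,V}$ equivalent on $\mathcal{F}_\mathbb{C}$, so $\mathrm{dom}\,\Delta_\nu$ remains a form core for $\mathcal{E}^{a,V}$. The main obstacle is upgrading form-core to operator-core; my plan is to approximate $f\in\mathrm{dom}\,H^{a,V}$ by the resolvent regularizations $f_n:=(I-n^{-1}\Delta_\nu)^{-1}f\in\mathrm{dom}\,\Delta_\nu$ and verify graph-norm convergence $H^{a,V}f_n\to H^{a,V}f$ using Lemma \ref{L:Leibniz} together with (\ref{E:multbyabetter}) to control the commutator of the regularizer with the magnetic and potential perturbation terms.
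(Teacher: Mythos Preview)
Your proof of (i) and of the existence part of (ii) is essentially the paper's own argument: expand $\mathcal{E}^{a,V}$ as $\mathcal{E}$ plus a perturbation, bound the perturbation via (\ref{E:multbyabetter}) together with Cauchy--Schwarz and Young to obtain an infinitesimal $\mathcal{E}$-form bound, and invoke the KLMN theorem. The paper's proof stops there; your explicit identification of $H^{a,V}$ with $T^\ast T+V$ through the closedness of $T=-i\partial-a$ (closed operator plus bounded perturbation) is more detailed than what the paper writes out.

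You are also correct that KLMN by itself only yields that $\mathrm{dom}\,\Delta_\nu$ is a \emph{form} core for $\mathcal{E}^{a,V}$, not an operator core for $H^{a,V}$; the paper's proof does not address this upgrade separately either, simply citing KLMN for the whole statement. Your resolvent-regularization plan is a natural route, but note that for it even to start one needs $\mathrm{dom}\,\Delta_\nu\subset\mathrm{dom}\,H^{a,V}$, which amounts to $fa\in\mathrm{dom}\,\partial^\ast$ for $f\in\mathrm{dom}\,\Delta_\nu$. By Lemma~\ref{L:Leibniz} this reduces (since $a^\ast\partial f\in L_2$ thanks to $a\in\mathcal{H}_\infty$) to $f\cdot\partial^\ast a\in L_2$, which under the bare hypothesis $a\in\mathcal{H}_\infty$ is not guaranteed without additional control on $\partial^\ast a$. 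So your commutator estimate would have to confront this point directly.
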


\begin{proof}
Recall that $\mathcal{E}^{a,V}(f,g)$ equals (\ref{E:RHSofE}) for any $f,g\in\mathcal{F}_\mathbb{C}$. We show that for any $\varepsilon>0$ and any $f\in \mathcal{F}_\mathbb{C}$, 
\begin{equation}\label{E:KLMN}
|M(f,f)|\leq \varepsilon^2 \mathcal{E}(f)+C_{a,V}\left\|f\right\|_{L_2(K,\nu)}^2,
\end{equation}
where $C_{a,V}>0$ is a constant bounded by $\left\|V\right\|_{L_\infty(K,\nu)}$ plus a multiple of $\left\|\Gamma_{\mathcal{H},\cdot}(a)\right\|_{L_\infty(K,\nu)}$. Then the result follows by the classical KLMN theorem, cf. \cite[Theorem X.17]{RS}. By the boundedness of $V$ clearly
\[|\left\langle fV,f\right\rangle_{L_2(K,\nu)}|\leq \left\|V\right\|_{L_\infty(K,\nu)}\left\|f\right\|_{L_2(K,\nu)}^2.\]
The bound (\ref{E:multbyabetter}) covers the summand $\left\langle fa,fa\right\rangle_\mathcal{H}$, and applying it once more,
\[|\left\langle \partial f, fa\right\rangle_\mathcal{H}|\leq \left\|\partial f\right\|_\mathcal{H}\left\|fa\right\|_\mathcal{H}\leq
\frac{1}{2}(\varepsilon^2\left\|\partial f\right\|_\mathcal{H}^2+\frac{1}{\varepsilon^2}\left\|\Gamma_{\mathcal{H},\cdot}(a)\right\|_{L_\infty(K,\nu)}\left\|f\right\|_{L_2(K,\nu)}^2).\]
Taking into account also (\ref{E:normandenergy}), we arrive at (\ref{E:KLMN}).
\end{proof}

\begin{remark}
Theorem  \ref{T:main} is a rather simple result and can certainly be improved. For instance, to obtain essential self-adjointness for magnetic Schr\"odinger operators on Euclidean spaces it is sufficient that the vector potential $A$ is locally in $L_4$ and $\diverg A$ is locally in $L_2$, while the scalar potential $V$ may be taken to be locally in $L_2$. See \cite{L83}. What we would like to point out here is that any analog of such a hypothesis will again require the weighted energy measure $\nu_a$ of $a$ to be absolutely continuous with respect to the reference measure $\nu$.
\end{remark}

If in (\ref{E:Hamiltonian}) a gradient $\nabla\Lambda$ of a real-valued function $\Lambda$ is added to the vector potential $A$, then 
the operator $H^{A+\nabla\Lambda,V}$ is unitarily equivalent to $H^{A,V}$, more precisely,
\[H^{A+\nabla\Lambda,V}=e^{i\Lambda}H^{A,V}e^{-i\Lambda}.\]
This property is usually referred to as \emph{gauge invariance}, cf. \cite{CTT10b, L83}. We observe a similar behaviour in our case.

\begin{theorem}\label{T:gauge}
Assume $V\in L_\infty(K,\nu)$ and $a,b\in\mathcal{H}_\infty$. If $b=a+\partial\lambda$ with some $\lambda\in\mathcal{F}$, then $dom\:H^{b,V}=dom\:H^{a,V}$ and
\[H^{b,V}=e^{i\lambda}H^{a,V}e^ {-i\lambda}.\]
\end{theorem}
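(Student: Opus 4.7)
The plan is to show that multiplication by $e^{i\lambda}$ intertwines the magnetic gradients $-i\partial-a$ and $-i\partial-b$, derive from this an invariance of the associated quadratic forms, and then invoke the uniqueness of the self-adjoint operator represented by a closed semibounded form. Throughout let $U_\lambda$ denote multiplication by $e^{i\lambda}$.

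First I would verify the properties of $e^{i\lambda}$ that are needed. Since $\lambda\in\mathcal{F}$ is real-valued and, by (\ref{E:resistest}), bounded, the chain rule for regular Dirichlet forms applied to $\cos$ and $\sin$ yields $\cos\lambda,\sin\lambda\in\mathcal{F}$, hence $e^{i\lambda}\in\mathcal{F}_\mathbb{C}$. Combining the Leibniz rule (\ref{E:Leibniz}) with a polynomial approximation of $z\mapsto e^{iz}$ on the bounded range of $\lambda$ (convergent in the graph norm of $(\mathcal{E},\mathcal{F}_\mathbb{C})$ by (\ref{E:pointwisemult}) and (\ref{E:normandenergy})), one obtains the chain rule $\partial e^{i\lambda}=ie^{i\lambda}\partial\lambda$ in $\mathcal{H}_\mathbb{C}$. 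Since $|e^{i\lambda}|=1$, $U_\lambda$ is unitary on $L_{2,\mathbb{C}}(K,\nu)$, and the fiberwise representation from Theorem \ref{T:fibers} gives
\[\left\langle e^{i\lambda}\omega,e^{i\lambda}\eta\right\rangle_\mathcal{H}=\int_K |e^{i\lambda(x)}|^2\Gamma_{\mathcal{H},x}(\omega_x,\eta_x)\nu(dx)=\left\langle\omega,\eta\right\rangle_\mathcal{H},\]
so $U_\lambda$ acts unitarily on $\mathcal{H}_\mathbb{C}$ as well. Because $\mathcal{F}_\mathbb{C}$ is an algebra by (\ref{E:pointwisemult}) and $e^{\pm i\lambda}\in\mathcal{F}_\mathbb{C}$, $U_\lambda$ maps $\mathcal{F}_\mathbb{C}$ bijectively onto itself.

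Next I would derive the pointwise intertwining identity. For $f\in\mathcal{F}_\mathbb{C}$ the Leibniz rule gives $\partial(e^{i\lambda}f)=e^{i\lambda}\partial f+f\,\partial e^{i\lambda}=e^{i\lambda}(\partial f+if\partial\lambda)$, and since $b=a+\partial\lambda$ and multiplication by $e^{i\lambda}\in\mathcal{F}_\mathbb{C}$ commutes with the bimodule action on $\mathcal{H}_\mathbb{C}$, a direct expansion gives
\[(-i\partial-b)(e^{i\lambda}f)=-ie^{i\lambda}\partial f+e^{i\lambda}f\partial\lambda-e^{i\lambda}fa-e^{i\lambda}f\partial\lambda=e^{i\lambda}(-i\partial-a)f.\]
Combined with the unitarity of $U_\lambda$ on $\mathcal{H}_\mathbb{C}$ and on $L_{2,\mathbb{C}}(K,\nu)$ and with Proposition \ref{P:quadratic}(i), this yields the form identity $\mathcal{E}^{b,V}(U_\lambda f,U_\lambda g)=\mathcal{E}^{a,V}(f,g)$ for all $f,g\in\mathcal{F}_\mathbb{C}$.

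To conclude, I would observe that the form $(f,g)\mapsto\mathcal{E}^{a,V}(U_\lambda^{-1}f,U_\lambda^{-1}g)$ on the domain $U_\lambda\mathcal{F}_\mathbb{C}=\mathcal{F}_\mathbb{C}$ is closed and non-negative as the unitary image of the closed non-negative form of Theorem \ref{T:main}, and its unique representing self-adjoint operator is $U_\lambda H^{a,V}U_\lambda^{-1}=e^{i\lambda}H^{a,V}e^{-i\lambda}$. By the form identity just derived, this form coincides with $(\mathcal{E}^{b,V},\mathcal{F}_\mathbb{C})$, whose unique representing operator is $H^{b,V}$ by Theorem \ref{T:main}. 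Hence $H^{b,V}=e^{i\lambda}H^{a,V}e^{-i\lambda}$ and the two domains coincide. The only nontrivial step is the chain rule $\partial e^{i\lambda}=ie^{i\lambda}\partial\lambda$ in the first paragraph; everything that follows is a bookkeeping exercise in the bimodule structure together with the classical uniqueness theorem relating closed quadratic forms and self-adjoint operators.
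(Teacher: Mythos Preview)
Your argument is correct and in fact cleaner than the paper's. The paper first proves the identity $\mathcal{E}^{a,0}(e^{-i\lambda}f)=\mathcal{E}^{b,0}(f)$ only for $\lambda=\Lambda\circ h$ with $\Lambda\in C^1(\mathbb{R}^2)$ (Lemma \ref{L:gauge}), using the explicit fiber isomorphisms $\Phi_x$ from harmonic coordinates to compute $\partial(e^{-i\lambda}f)$ for coordinate-smooth $f$, and then approximating general $f\in\mathcal{F}_\mathbb{C}$. In the proof of Theorem \ref{T:gauge} itself the paper then runs a second layer of approximation, replacing $a$ by $a_n\in\mathcal{S}$ and $\lambda$ by coordinate-smooth $\lambda_n$, and carefully controls $|\mathcal{E}^{b,V}(f)-\mathcal{E}^{b_n,V}(f)|$ and $|\mathcal{E}^{a,V}(e^{-i\lambda}f)-\mathcal{E}^{a_n,V}(e^{-i\lambda_n}f)|$ via several Cauchy--Schwarz and chain-rule estimates.

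You bypass both approximation layers by establishing the chain rule $\partial e^{i\lambda}=ie^{i\lambda}\partial\lambda$ directly for arbitrary $\lambda\in\mathcal{F}$, using only the Leibniz rule (\ref{E:Leibniz}), the algebra property (\ref{E:pointwisemult}), the bounded module action (\ref{E:boundedactions}), and closedness of $\partial$. Once this is in hand, the intertwining $(-i\partial-b)U_\lambda=U_\lambda(-i\partial-a)$ on $\mathcal{F}_\mathbb{C}$ is a one-line computation, and the conclusion follows from the form-operator correspondence exactly as you say. Your route is coordinate-free and therefore transfers verbatim to the general setting of Section \ref{S:other}; the price the paper pays for working through harmonic coordinates is the two-stage approximation argument. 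One cosmetic remark: the form $\mathcal{E}^{a,V}$ is in general only semibounded below (via the KLMN estimate (\ref{E:KLMN})), not non-negative, so you should say ``semibounded'' rather than ``non-negative'' when invoking the representation theorem---but the paper's own Theorem \ref{T:main}(ii) has the same slip, and it does not affect the argument.
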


\begin{remark}
Recall Remark \ref{R:ortho}. Let $\mathcal{P}_\bot$ denote the orthogonal projection onto $(Im\:\partial)^\bot$. By Theorem \ref{T:gauge} the operator $H^{a,V}$ is uniquely determined up to unitary equivalence by $V$ and $\mathcal{P}_\bot a$. In this sense we can always restrict attention to divergence free vector potentials $a\in\mathcal{H}_\infty\cap ker\:\partial^\ast$. In the classical case the condition $\partial^\ast a=0$ is referred to as \emph{Coulomb gauge condition}, see for instance \cite{Si73}. 
\end{remark}

To prove Theorem \ref{T:gauge} we first verify the following lemma.

\begin{lemma}\label{L:gauge}
Let $a\in\mathcal{H}_\infty$, let $\lambda=\Lambda\circ h$, $\Lambda\in C^1(\mathbb{R}^2)$ and set $b:=a+\partial \lambda$. Then we have
\[\mathcal{E}^{a,0}(e^{-i\lambda} f)=\mathcal{E}^{b,0}(f),\ \ f\in\mathcal{F}_\mathbb{C}.\]
\end{lemma}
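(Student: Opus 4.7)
The plan is to reduce everything to the pointwise formula of Proposition~\ref{P:quadratic}(i), which rewrites $\mathcal{E}^{a,0}(f) = \|(-i\partial - a)f\|_\mathcal{H}^2$ and likewise for $b$, and then to show directly that
\[(-i\partial - a)(e^{-i\lambda}f) = e^{-i\lambda}(-i\partial - b)f\]
as elements of $\mathcal{H}_\mathbb{C}$. Since multiplication by the unimodular function $e^{-i\lambda}$ is isometric on $\mathcal{H}_\mathbb{C}$ (this is the equality case in the boundedness estimate (\ref{E:boundedactions}) and is immediate from the fiberwise description of $\mathcal{H}_\mathbb{C}$ in Theorem~\ref{T:fibers}, noting $|e^{-i\lambda(x)}|=1$ $\nu$-a.e.), taking $\mathcal{H}$-norm-squared on both sides of the displayed identity would then finish the proof.

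First I would verify that $e^{-i\lambda}\in\mathcal{F}_\mathbb{C}$. Since $\lambda=\Lambda\circ h$ with $\Lambda\in C^1(\mathbb{R}^2)$ real-valued and $h$ continuous on compact $K$, both $\cos\Lambda$ and $\sin\Lambda$ lie in $C^1_\mathbb{C}(\mathbb{R}^2)$, so by the very definition of $\mathcal{S}_\mathbb{C}$ and its density in $\mathcal{H}_\mathbb{C}$, and the standard chain rule (Corollary~\ref{C:gradient}), $e^{-i\lambda}\in\mathcal{F}_\mathbb{C}$ with
\[\partial(e^{-i\lambda}) = -i\,e^{-i\lambda}\,\partial\lambda,\]
computed either via the explicit coordinate formula $\Phi_x((\partial(e^{-i\Lambda\circ h}))_x) = -i e^{-i\Lambda(y)} Z_x\nabla\Lambda(y)$ or directly as a special case of the chain rule for $\mathcal{E}$.

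Next I would apply the Leibniz rule (\ref{E:Leibniz}) and the commutativity $c\omega = \omega c$ for $c\in\mathcal{F}_\mathbb{C}$:
\begin{align*}
(-i\partial - a)(e^{-i\lambda}f) &= -i e^{-i\lambda}\partial f - i f\,\partial(e^{-i\lambda}) - a\,e^{-i\lambda} f\\
&= -i e^{-i\lambda}\partial f - e^{-i\lambda} f\,\partial\lambda - e^{-i\lambda}\,a f\\
&= e^{-i\lambda}\bigl[-i\partial f - f(a+\partial\lambda)\bigr]\\
&= e^{-i\lambda}\bigl[-i\partial - b\bigr]f.
\end{align*}
Taking $\|\cdot\|_\mathcal{H}^2$ and using isometric invariance under multiplication by $e^{-i\lambda}$ yields
\[\mathcal{E}^{a,0}(e^{-i\lambda}f) = \|e^{-i\lambda}(-i\partial - b)f\|_\mathcal{H}^2 = \|(-i\partial-b)f\|_\mathcal{H}^2 = \mathcal{E}^{b,0}(f).\]

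The main conceptual point — and the only thing one might stumble over — is the isometry of multiplication by $e^{-i\lambda}$ on $\mathcal{H}_\mathbb{C}$. The excerpt states only the inequality in (\ref{E:boundedactions}), but the direct integral decomposition of Theorem~\ref{T:fibers} makes this pointwise, reducing it to the trivial fact that multiplication by a unimodular scalar preserves norms in each finite-dimensional fiber $\mathcal{H}_{\mathbb{C},x}$. Once this is recorded, the lemma is essentially a one-line chain-rule computation; no deeper analytic input is needed, and the restrictions $V=0$ and $\lambda=\Lambda\circ h$ with $\Lambda\in C^1(\mathbb{R}^2)$ are exactly what make the chain rule immediately applicable.
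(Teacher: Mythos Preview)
Your argument is correct and is in fact cleaner than the paper's own proof. Both arrive at the same core intertwining identity $e^{i\lambda}(-i\partial-a)e^{-i\lambda}f=(-i\partial-b)f$ and then use that multiplication by a unimodular function is an isometry on $\mathcal{H}_\mathbb{C}$; the difference lies in how that identity is obtained. The paper first restricts to $f=F\circ h$ with $F\in C^1_\mathbb{C}(\mathbb{R}^2)$, computes the identity fiberwise via the isomorphisms $\Phi_x$ of Theorem~\ref{T:fibers}, and then runs a separate approximation argument (using (\ref{E:resistest}) and (\ref{E:pointwisemult})) to extend to general $f\in\mathcal{F}_\mathbb{C}$. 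You instead observe that $e^{-i\lambda}$ itself is of the form $G\circ h$ with $G=e^{-i\Lambda}\in C^1_\mathbb{C}(\mathbb{R}^2)$, compute $\partial(e^{-i\lambda})=-ie^{-i\lambda}\partial\lambda$ once via Corollary~\ref{C:gradient}, and then invoke the abstract Leibniz rule (\ref{E:Leibniz}), which is already stated for arbitrary elements of $\mathcal{F}_\mathbb{C}$. This eliminates the approximation step entirely. The paper's route has the minor advantage of making the coordinate picture completely explicit, but your route is shorter and uses only ingredients the paper has already set up.
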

\begin{proof} Note first that under the stated hypotheses we also have $b\in\mathcal{H}_\infty$.

Assume that $f=F\circ h$ with $F\in C^1_\mathbb{C}(\mathbb{R}^2)$. By (\ref{E:fiberiso}) we have
\[\Phi_x(-i\partial(e^{-i\lambda} f)_x)=-iZ_x(-i\nabla\Lambda(y)e^{-i\Lambda(y)}F(y)+e^{-i\Lambda(y)}\nabla F(y))=\Phi_x((-e^{-i\lambda}\partial\lambda-ie^{-i\lambda}\partial f)_x)\]
for $\nu$-a.a. $x\in X$, and as the $\Phi_x$ are isomorphisms also
\[-i\partial(e^{-i\lambda} f)_x=(-e^{-i\lambda}\partial\lambda-ie^{-i\lambda}\partial f)_x\]
in the spaces $\mathcal{H}_{\mathbb{C},x}$. Integrating, we obtain
\[-i\partial(e^{-i\lambda} f)=-e^{-i\lambda}\partial\lambda-ie^{-i\lambda}\partial f\]
in $\mathcal{H}_\mathbb{C}$ and therefore
\[e^{i\lambda}(-i\partial -a)e^{-i\lambda}f=(-i\partial-b)f,\]
what implies the desired equality. For general $f\in\mathcal{F}_\mathbb{C}$ let $(f_n)_n$ with $f_n=F_n\circ h$ and $F_n\in C^1_\mathbb{C}(\mathbb{R}^2)$ be a sequence approximating $f$ in $\mathcal{E}$. Then
\[|\mathcal{E}^{a,0}(f)-\mathcal{E}^{a,0}(f_n)|=|\mathcal{E}(f)-\mathcal{E}(f_n)|+2|\left\langle \partial f,a f\right\rangle_\mathcal{H}-\left\langle \partial f_n,a f_n\right\rangle_\mathcal{H}|+|\left\|af\right\|_\mathcal{H}^2-\left\|a f_n\right\|_\mathcal{H}^2|.\]
The first term clearly tends to zero, for the second we have the estimate 
\begin{multline}
2|\left\langle \partial f-\partial f_n,af\right\rangle_\mathcal{H}|+2|\left\langle \partial f_n, af-af_n\right\rangle_\mathcal{H}|\notag\\
\leq 2\mathcal{E}(f-f_n)^{1/2}\left\|af\right\|_\mathcal{H}+2\left\|f-f_n\right\|_{L_\infty(K,\nu)}\left\|a\right\|_\mathcal{H}\sup_n\mathcal{E}(f)^{1/2},
\end{multline}
which tends to zero by (\ref{E:resistest}). The third does not exceed
\[\left\|a(f-f_n)\right\|_\mathcal{H}(\left\|af\right\|_\mathcal{H}+\left\|af_n\right\|_\mathcal{H}),\]
what is similarly seen to converge to zero.
\end{proof}

We prove Theorem \ref{T:gauge}.

\begin{proof}
For $\lambda $ as in the theorem we obviously have $\partial\lambda\in\mathcal{H}_\infty$. Let $(a_n)\subset \mathcal{S}$ be a sequence approximating $a$ in $\mathcal{H}$ and let $(\lambda_n)_n$ with $\lambda_n=\Lambda_n\circ h$ and $\Lambda_n\in C^1(\mathbb{R}^2)$ be a sequence approximating $\lambda$ in $\mathcal{E}$. Set $b_n:=a_n+\partial\lambda_n$. Then Lemma \ref{L:gauge} implies
\begin{equation}\label{E:simplecase}
\mathcal{E}^{a_n,V}(e^{-i\lambda_n} f)=\mathcal{E}^{b_n,V}(f), \ \ f\in\mathcal{F}_\mathbb{C}.
\end{equation}
Now we first claim that
\begin{equation}\label{E:approxbn}
\mathcal{E}^{b,V}(f)=\lim_n\mathcal{E}^{b_n,V}(f),\ \ f\in\mathcal{F}_\mathbb{C}.
\end{equation}
This follows from
\[|\mathcal{E}^{b,V}(f)-\mathcal{E}^{b_n,V}(f)|\leq 2|\left\langle \partial f, (b-b_n)f\right\rangle_\mathcal{H}|+|\left\|fb\right\|_\mathcal{H}^2-\left\|fb_n\right\|_\mathcal{H}^2|\]
because we have the upper bound
\[2\mathcal{E}(f)^{1/2}\left\|f\right\|_{L_\infty(K,\nu)}\left\|b-b_n\right\|_\mathcal{H}\leq 2\mathcal{E}(f)\left\|b-b_n\right\|_\mathcal{H}\]
for the first summand, and for the second,
\[\left\|(b-b_n)\right\|_\mathcal{H}\left\|f\right\|_{L_\infty(K,\nu)}(\left\|fb\right\|_\mathcal{H}+\left\|fb_n\right\|_\mathcal{H}).\]
Combining, we arrive at (\ref{E:approxbn}). Next, note that we also have 
\begin{equation}\label{E:approxan}
\mathcal{E}^{a,V}(e^{-i\lambda} f)=\lim_n\mathcal{E}^{a_n,V}(e^{-i\lambda_n}f),\ \ f\in\mathcal{F}_\mathbb{C}.
\end{equation}
To see this, note that 
\begin{align}
|\mathcal{E}^{a,V}(e^{-i\lambda}f)&-\mathcal{E}^{a_n,V}(e^{-i\lambda_n}f)|\notag\\
&\leq |\mathcal{E}^{a,V}(e^{-i\lambda} f)-\mathcal{E}^{a,V}(e^{-i\lambda_n} f)|+|\mathcal{E}^{a,V}(e^{-i\lambda_n} f)-\mathcal{E}^{a_n,V}(e^{-i\lambda_n} f)|\notag\\
&\leq 2|\left\langle \partial ((e^{-i\lambda}-e^{-i\lambda_n})f), a\right\rangle_\mathcal{H}|+|\left\|a e^{-i\lambda}f\right\|_\mathcal{H}^2-\left\|a e^{-i\lambda_n} f\right\|_\mathcal{H}^2|\notag\\    
&\ \ + 2 |\left\langle \partial(e^{-i\lambda_n}f), a-a_n\right\rangle_\mathcal{H}|+|\left\| a e^{-i\lambda_n}f\right\|_\mathcal{H}^2-\left\|a_n e^{-i\lambda_n} f\right\|_\mathcal{H}^2|,\notag
\end{align}
what can be estimated by
\begin{align}\label{E:summandstobecovered}
2&\left\|\partial(1-e^{i(\lambda-\lambda_n)})e^{-i\lambda}f\right\|_\mathcal{H}+2\left\|(1-e^{i(\lambda-\lambda_n)})\partial(e^{-i\lambda} f)\right\|_\mathcal{H}\\
&+\left\|(1-e^{i(\lambda-\lambda_n)})e^{-i\lambda} fa\right\|_\mathcal{H}\left(\left\|e^{-i\lambda}fa\right\|_\mathcal{H}+\left\|e^{-i\lambda_n}fa\right\|_\mathcal{H}\right)\notag\\
&+2\left\|\partial(e^{-i\lambda_n}f)\right\|_\mathcal{H}\left\| a-a_n\right\|_\mathcal{H}\notag\\
&+\left\|e^{-i\lambda_n}f(a-a_n)\right\|_\mathcal{H}\left(\left\|e^{-i\lambda_n}fa\right\|_\mathcal{H}+\left\|e^{-i\lambda_n}fa\right\|_\mathcal{H}\right).\notag
\end{align}
By the chain rule, \cite[Theorem 3.2.2]{FOT94}, we have 
\[\mathcal{E}(1-e^{i(\lambda-\lambda_n)})=\int_K|e^{i(\lambda-\lambda_n)}|^2\Gamma(\lambda-\lambda_n)d\nu=\mathcal{E}(\lambda-\lambda_n),\]
and consequently
\[\lim_n\int_K|e^{i\lambda}f|^2\Gamma(1-e^{i(\lambda-\lambda_n)})d\nu\leq \lim_n\left\|f\right\|_{L_\infty(K,\nu)}\mathcal{E}(\lambda-\lambda_n)=0.\]
The function $z\mapsto 1-e^{-iz}$ is continous and bounded, and by (\ref{E:resistest}) we have $\lim_n \lambda_n=\lambda$ uniformly on $K$. Therefore
\[\lim_n\int_K|1-e^{i(\lambda-\lambda_n)}|^2\Gamma(e^{-i\lambda}f)d\nu=0\]
by bounded convergence. Combining these estimates we see that the first line in (\ref{E:summandstobecovered}) converges to zero. The other terms in (\ref{E:summandstobecovered}) obey similar bounds, use (\ref{E:resistest}), (\ref{E:pointwisemult}) and note that
\[\sup_n\mathcal{E}(e^{-i\lambda_n})=\sup_n\int_K|e^{-i\lambda_n}|^2\Gamma(\lambda_n)d\nu=\sup_n\mathcal{E}(\lambda_n)<\infty.\]
They all converge to zero, and (\ref{E:approxan}) becomes evident. Clipping (\ref{E:simplecase}), (\ref{E:approxbn}) and (\ref{E:approxan}) we obtain the equality of closed forms
\[\mathcal{E}^{a,V}(e^{-i\lambda} f)=\mathcal{E}^{b,V}(f), \ \ f\in\mathcal{F}_\mathbb{C},\]
which implies the coincidence of the associated self-adjoint operators and therefore Theorem \ref{T:gauge}.
\end{proof}

\section{Other fractals}\label{S:other}

The results of the preceding sections easily carry over to regular resistance forms on finitely ramified fractals, see e.g. \cite{IRT, Ki03, Ki12, T08} for background and precise definitions. In this situation it is always possible to find a complete (up to constants) energy orthonormal system $h_1,...,h_k$ of harmonic functions. We assume that 
\[h(x):=(h_1(x),...,h_k(x)), \ \ x\in X,\]
defines a homeomorphism from $X$ onto its image $h(X)$ in $\mathbb{R}^k$. We can then define a (normed) Kusuoka energy measure as the sum
\[\nu:=\nu_{h_1}+...+\nu_{h_k}\]
of the corresponding energy measures, cf. \cite[Definition 3.5]{T08}. By \cite[Theorem 3]{T08} the form $(\mathcal{E},\mathcal{F})$ is a regular Dirichlet form on $L_2(X,\nu)$. Following Kusuoka \cite{Ku89} one can construct a matrix valued function $Z$ on $X$ as before, such that $Z_{ij}$ is a version of the density of $\nu_{h_i,h_j}$ with respect to $\nu$. By \cite[Theorem 6]{T08} the collection of functions $F\circ h$ with $F\in C_\mathbb{C}^1(\mathbb{R}^k)$ is dense in $\mathcal{F}_\mathbb{C}$ and formula (\ref{E:Kigami}) still holds. All results of Section \ref{S:vector} may be rewritten for $X$ in place of the Sierpinski gasket $K$. The constructions of Sections \ref{S:Dirac} and \ref{S:magnetic} do not depend on the specific structure of $X$ and therefore remain valid as well.
\begin{remark}
We would like to  point out that by minor modifications Section \ref{S:magnetic} applies to any local regular Dirichlet form that admits energy densities (i.e. a carr\'e du champ) with respect to the given reference measure (see \cite{BH91,HRT}). Section \ref{S:Dirac} does not require energy densities, although it is most naturally applicable for  topologically one-dimensional spaces of arbitrary large spectral and Hausdorff dimensions (see \cite{HT}). 
\end{remark}

\end{document}